\documentclass[letterpaper,11pt]{article} 

\usepackage{amsmath, amsfonts, amssymb, amsthm, amscd, graphicx, 
enumerate, footmisc, mathtools, xcolor}
\setcounter{MaxMatrixCols}{30}
\setcounter{tocdepth}{10}
\providecommand{\U}[1]{\protect\rule{.1in}{.1in}}
\setlength{\topmargin}{-1.0in}
\setlength{\textheight}{9.25in}
\setlength{\oddsidemargin}{0.0in}
\setlength{\evensidemargin}{0.0in}
\setlength{\textwidth}{6.5in}

\def\theenumi{\arabic{enumi}}

\def\theenumii{\alph{enumii}}
\def\p@enumii{\theenumi.}

\def\theenumiii{\arabic{enumiii}}
\def\p@enumiii{(\theenumi)(\theenumii)}

\def\p@enumiv{\p@enumiii.\theenumiii}
\pagestyle{plain}






\usepackage{fullpage}
\usepackage{mathrsfs}
\newtheorem{theorem}{Theorem}[section]

\newtheorem{definition}[theorem]{Definition}

\numberwithin{equation}{section}

\usepackage{titling}
\usepackage{enumitem} 
\setlist{noitemsep,topsep=0pt,parsep=0pt} 
\usepackage[margin=1cm]{caption} 
\usepackage{subfig}
\usepackage{tikz}
\usetikzlibrary{calc,fit}


\numberwithin{equation}{section}
\usepackage[bookmarks=true,hypertexnames=false]{hyperref}
\hypersetup{colorlinks=true, citecolor=blue, linkcolor=red, urlcolor=blue}
\makeatletter

\newcommand{\Rmnum}[1]{\expandafter\@slowromancap\romannumeral #1@}
\makeatother

\newcommand{\Holant}{\operatorname{Holant}}
\newcommand{\PlHolant}{\operatorname{Pl-Holant}}

\newenvironment{remark}{\medskip{\bfseries \noindent Remark:}}{\par\medskip}{\par\medskip}





\usepackage[export]{adjustbox} 
%

\usepackage{pgfplots}
\pgfplotsset{compat=newest}
\def\borderColor{blue!60}
\def\scale{0.6}
\def\nodeDist{1.4cm}
\tikzstyle{internal} = [draw, fill, shape=circle]
\tikzstyle{external} = [shape=circle]
\tikzstyle{square}   = [draw, fill, rectangle, inner sep=5pt]
\tikzstyle{pentagon} = [draw, fill, regular polygon, regular polygon sides=5, inner sep=2pt, minimum size=14pt]

\pgfdeclarelayer{background}
\pgfsetlayers{background,main}

\usepackage{caption}
\usepackage{subfig}
\captionsetup[subfigure]{margin=0pt}

\newcommand{\AllDistinct}{\textsc{All-Distinct}}

\newcommand{\SHARPP}{{\#\rm{P}}}
\newcommand{\AD}{\operatorname{AD}}



\begin{document}

\title{{\bf The Complexity of Counting Edge Colorings for Simple Graphs}}

\vspace{0.3in}

\author{Jin-Yi Cai\thanks{Department of Computer Sciences, University of Wisconsin-Madison. Supported by NSF CCF-1714275.} \\ {\tt jyc@cs.wisc.edu}
\and Artem Govorov\thanks{Department of Computer Sciences, University of Wisconsin-Madison. Supported by NSF CCF-1714275. 
 } \thanks{Artem Govorov is the author's preferred spelling 
of his name,
rather than the official spelling Artsiom Hovarau.
 } \\ {\tt hovarau@cs.wisc.edu}}

\date{}
\maketitle

\bibliographystyle{plainurl}

\begin{abstract}
We prove \#P-completeness results for counting edge colorings on \emph{simple} graphs.
These strengthen the corresponding results on multigraphs from \cite{cgw-focs-2014}.
We prove that for any $\kappa \ge r \ge 3$ counting $\kappa$-edge colorings
on $r$-regular simple graphs is \#P-complete.
Furthermore, we show that for planar $r$-regular simple graphs where $r \in \{ 3, 4, 5 \}$
counting edge colorings with $\kappa$ colors for any $\kappa \ge r$ is also \#P-complete.
As there are no planar $r$-regular simple graphs for any $r > 5$,
these statements cover all interesting cases in terms of the 
parameters  $(\kappa, r)$.
\end{abstract}

\thispagestyle{empty}
\clearpage
\setcounter{page}{1}

\section{Introduction}


A proper edge $\kappa$-coloring, or simply
an edge $\kappa$-coloring, of a graph $G$ is
a labeling of its edges using at most $\kappa$ distinct symbols, 
called \emph{colors},  such that any two incident edges have different colors.
The number of (proper) edge $\kappa$-colorings of $G$ is an important
graph parameter, and can be naturally expressed
as the Holant value on the graph  $G$ where  every vertex of $G$ 
is assigned the  {\sc All-Distinct} constraint function.

%

Given a graph  $G$, it is an interesting problem to determine
 how many colors are required to properly edge color $G$.
The minimum number needed is called the
edge chromatic number, or chromatic index, and is denoted by $\chi'(G)$.
An obvious lower bound is $\chi'(G) \ge \Delta(G)$,
the maximum degree of the graph.
Vizing's Theorem~\cite{Viz65,Diestel-book} states that 
$\Delta(G) + 1$ colors always suffice for simple graphs (i.e. graphs without self-loops or parallel edges).
Whether $\Delta(G)$ colors suffice depends on the graph $G$, and is
clearly a problem in NP.

Consider the edge coloring problem over $3$-regular simple graphs $G$.
It is well known that
if a $3$-regular graph $G$ contains a bridge (i.e., a cut consisting of one edge)
 then it has no edge $3$-coloring.
This follows from a parity argument;
see \cite{Isaacs-1975}
and also a simple proof in section 2 of \cite{Hol81}.
For bridgeless planar $3$-regular simple graphs,
Tait~\cite{Tai80} proved in 1880 that
an edge $3$-coloring always exists iff the Four Color Conjecture (now
Theorem) holds. 
Thus, to the existence question
of an edge $3$-coloring on a planar $3$-regular simple graph,
the answer is 
that there is an edge $3$-coloring iff the graph is bridgeless.
It follows that edge $3$-colorability for
$3$-regular graphs, as a problem  in NP, when restricted
to planar simple  graphs  is solvable in P by the trivial algorithm of
checking that the graph is bridgeless.
 However note that
the correctness of this algorithm is certainly not trivial:
by Tait's Theorem, this correctness is equivalent to a proof of
the Four Color Theorem.

Without the planarity restriction,
Holyer~\cite{Hol81} proved that
determining if a $3$-regular simple graph has an edge $3$-coloring 
is NP-complete.
Leven and Galil~\cite{LG83} extended this result to all
$\kappa \ge 3$: edge $\kappa$-colorability over $\kappa$-regular 
simple graphs is NP-complete for all $\kappa \ge 3$.

We say a polynomial time reduction $f$ from {\sc Sat} to
a decision problem $\Pi$
is parsimonious if it preserves the number of solutions,
i.e., for any instance $\Phi$ of {\sc Sat},
the number of solutions to the problem $\Pi$ on the instance
$f(\Phi)$ is the same as the number of satisfying assignments
of $\Phi$.
A parsimonious
reduction from {\sc Sat} to $\Pi$  also implies that
 the counting version of the problem $\Pi$
is \#P-complete.
Most NP-completeness reductions from {\sc Sat} to decision problems
are, or can be made to be, parsimonious~\cite{Berman-Hartmanis,JanosSimon}.



However,
the reductions by Holyer~\cite{Hol81} and by Leven-Galil~\cite{LG83}
 are not parsimonious.
In fact  we have the following result from~\cite{Wel93}
(p.~118, attributed to an unpublished work by
Edwards and Welsh~\cite{Edwards-Welsh}) that for any $\kappa \ge 4$,
 no parsimonious reduction exists  from  {\sc Sat} to edge $\kappa$-coloring,
unless P $=$ NP, which we explain below using a reduction due
to Blass and Gurevich~\cite{Blass-Gurevich} with a slight modification.


To discuss the (non)-existence of parsimonious reductions 
from  {\sc Sat},
we define a solution to edge $\kappa$-coloring
as a partition of the edge set into $\kappa$ pairwise disjoint matchings
(some of which may be empty).
For graphs $G$ with maximum degree $\Delta(G) =\kappa$,
if  $\chi'(G) = \kappa$, then all
 $\kappa$ matchings must be nonempty.
A theorem due to Thomason~\cite{Thomason1978} says that, for $\kappa \ge 4$,
among graphs with $\chi'(G) = \kappa$ and ignoring isolated vertices,
the only graph that has a unique (proper) edge $\kappa$-coloring is the 
star $K_{1,\kappa}$ (uniqueness in the sense of
counting partitions).
Without the assumption  $\chi'(G) = \kappa$,
and again ignoring isolated vertices, here is a complete list
of graphs having a unique edge $\kappa$-coloring
(again uniqueness in the sense of
counting partitions):
Clearly $\chi'(G) \le \kappa$. Suppose $\chi'(G) < \kappa$.
Let $\sigma$ be a proper $\kappa$-coloring using $\chi'(G)$ colors,
 and let
$C_\alpha$ be the set of $\alpha$-colored edges. Then every  non-empty
$C_\alpha$ consists of a single edge, for otherwise
one can split it by an unused color from $[\kappa]$
and violate the uniqueness. Moreover, for any $\alpha\not =
\beta$, if $C_\alpha$ and $C_\beta$
are non-empty  then the two edges are incident, for otherwise
we can recolor the edge in $C_\beta$  by $\alpha$ and violate the uniqueness.
It follows that, in addition to $K_{1,\kappa}$, the graph $G$
can only be $C_3$ (a cycle of length 3) or among 
$\{K_{1, j} \mid 1 \le j < \kappa\}$ or just the empty graph, and indeed these graphs
have a unique  edge $\kappa$-coloring according to the definition by counting
partitions.
Obviously it is linear time verifiable whether a graph $G$ is
one of $K_{1, j}$ ($1 \le j \le \kappa$) or $C_3$ or the empty graph,
with a finite union of isolated vertices.

Now suppose for a contradiction that a parsimonious reduction $f$ exists
from  {\sc Sat} to edge $\kappa$-coloring, for some  $\kappa \ge 4$.
Let $\Phi$ be an instance of {\sc Sat} in conjunctive normal form,
$\Phi = C_1 \wedge C_2 \wedge \ldots \wedge C_m$,
where each $C_i$ is a disjunction of literals on the variables
$\{x_1, x_2, \ldots, x_n\}$. Let $y$ be a new variable.
Define another Boolean formula
\[\Phi' = [\neg y \vee C_1]  \wedge [\neg y \vee C_2]  \wedge \ldots \wedge 
[\neg y \vee C_m]  \wedge [y \vee x_1]  \wedge [ y \vee x_2]
 \wedge \ldots \wedge  [ y \vee x_n].\]
Clearly $\Phi'$ has the following satisfying assignments,
$y=0, x_1 = x_2 = \ldots = x_n =1$,
together with $y=1$ and all satisfying assignments $(x_1, x_2, \ldots, x_n)$
to $\Phi$.  Therefore  $\Phi'$ has exactly one more 
satisfying assignment than $\Phi$ does. 
It follows that if $\Phi \not \in$ {\sc Sat},
then  $\Phi'$ has a unique satisfying assignment,
which implies that $f(\Phi')$ has a unique solution for edge $\kappa$-coloring;
if $\Phi \in$  {\sc Sat},
then  $\Phi'$ has more than one satisfying assignments,
which implies that
  $f(\Phi')$ does not have a unique solution for edge $\kappa$-coloring.
But uniqueness of solutions to edge $\kappa$-coloring
is linear time testable by Thomason's theorem. This would
imply P $=$ NP.

\begin{remark}
We remark that the notion of \emph{unique} edge $\kappa$-coloring
in the sense of counting partitions is the standard one;
by contrast if we are to count the number of coloring assignments,
we would incur a factor $\kappa!$ for graphs with $\chi'(G) =\kappa$, and
there would be no parsimonious reduction for a trivial reason.
In the rest of the paper, we will adopt the notion of
counting edge colorings by counting the number of
proper edge coloring assignments.
For $\Delta(G) = \kappa$, a fortiori for $\kappa$-regular graphs,
if $\chi'(G) =\kappa$,  then the number of
solutions according to the two definitions differ by exactly a factor
$\kappa!$. Thus  our \#P-hardness results in Theorems~\ref{thm:edge_coloring-regular-simple-non-planar}
and \ref{thm:edge_coloring-regular-simple-planar}
also imply that counting edge $\kappa$-coloring is \#P-complete
in the sense of counting partitions, by simply restricting to 
$\kappa$-regular graphs.
\end{remark}

It is reasonable to suppose that  this lack of
parsimonious reductions for edge coloring
is the reason that it remained an open problem for some decades
whether the counting problem for edge coloring is \#P-complete,
until it was proved in~\cite{cgw-focs-2014}.
Their proof is carried out as part of a classification
program on counting problems, in particular a certain subclass of
Holant problems on higher domain sizes.  However, as is
generally true for Holant problems, the \#P-hardness applies to
multigraphs, i.e., parallel edges and loops are allowed.
By contrast, a simple graph does not have  parallel edges or loops.
Our main purpose in this paper is to prove:

\begin{theorem} \label{thm:edge_coloring-regular-simple-non-planar}
 \#$\kappa$-\textsc{EdgeColoring} is $\SHARPP$-complete
 over $r$-regular simple graphs for any integers $\kappa \ge r \ge 3$.
\end{theorem}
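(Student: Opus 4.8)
The plan is to reduce from the multigraph version of the problem, which is $\SHARPP$-complete by~\cite{cgw-focs-2014}, to the simple-graph version. Throughout I view counting $\kappa$-edge-colorings as the Holant problem in which every vertex of degree $d$ carries the arity-$d$ $\AD$ (\textsc{All-Distinct}) signature, so that a valid configuration is exactly a proper edge $\kappa$-coloring. An $r$-regular multigraph differs from a simple one only through self-loops and parallel edges. A self-loop at $v$ places two equal colors into the $\AD$-constraint at $v$, forcing every configuration (hence the whole count) to vanish; such instances are trivial, so without loss of generality the hard instances produced by~\cite{cgw-focs-2014} are loop-free, and the entire task reduces to removing \emph{parallel edges} while keeping the graph both simple and exactly $r$-regular.

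The central step is a local gadget that replaces each bundle of parallel edges between two vertices $u$ and $v$ by a simple $r$-regular fragment. First I would analyze the signature of a bundle of $m$ parallel $u$--$v$ edges: since each physical edge carries a single color seen identically at both ends, conditioned on the colors of the remaining edges at $u$ and at $v$ the bundle enforces that the two endpoints use the \emph{same} set of $m$ colors, contributing a combinatorial weight that depends on $(\kappa,r)$. The goal is to realize an equivalent constraint through a simple subgraph all of whose internal vertices have degree $r$. I would build this fragment out of $\AD$-vertices joined into a simple graph with dangling half-edges attached to $u$ and $v$, using auxiliary ``padding'' vertices to raise every internal degree up to exactly $r$ so that regularity is preserved at the seams. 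The delicate bookkeeping is to make the external behavior of the fragment reproduce the parallel-bundle signature up to a controllable factor while introducing no new parallel edges or loops.

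Because such a simple fragment will in general realize not the bundle signature itself but a linear combination of it with finitely many other signatures, the final ingredient is polynomial interpolation. I would construct a one-parameter family of gadgets --- for instance by iterating or ``stretching'' the basic fragment $t$ times --- so that the count on the resulting simple graph, as a function of $t$, is a fixed linear combination $\sum_j c_j(G)\,\lambda_j^{\,t}$ whose unknown coefficients $c_j(G)$ include the desired multigraph count. Evaluating the simple-graph oracle at enough values of $t$ yields a linear system whose matrix is Vandermonde in the $\lambda_j$; provided the $\lambda_j$ are distinct and nonzero, the system is invertible and we recover the multigraph count in polynomial time. The same framework is intended to work uniformly for every $\kappa \ge r \ge 3$, with the gadget depending on the pair $(\kappa,r)$; if necessary the diagonal case $\kappa = r$ is treated first, and the surplus colors in the case $\kappa > r$ are absorbed into the weight analysis of the fragment.

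The main obstacle I anticipate is the gadget design itself, where three requirements must be met simultaneously: the fragment must be a \emph{simple} graph, it must be exactly $r$-regular after the padding so that splicing it into the instance preserves regularity, and its signature must be understood well enough to drive the interpolation. Concretely, the crux is to guarantee that the eigenvalues $\lambda_j$ arising from the stretched family are \emph{distinct and nonzero} (equivalently, that the relevant transfer/signature matrix is diagonalizable with a full set of distinct nonzero eigenvalues), since any collision or vanishing collapses the Vandermonde system and destroys the recovery of the count. Verifying this nondegeneracy for all $(\kappa,r)$ with $\kappa \ge r \ge 3$, rather than for a single convenient case, is where I expect the real work to lie.
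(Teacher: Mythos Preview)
Your overall strategy matches the paper's: reduce from the multigraph problem by replacing offending edges with simple $r$-regular gadgets, then recover the original count via interpolation over a stretched family. Three simplifications from the paper are worth noting.

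First, there is no need to treat a bundle of $m$ parallel edges as a single $2m$-ary constraint. Replacing each edge individually by a binary gadget (with internal vertices of degree $r$ and two dangling half-edges) already destroys all parallelism, so you only ever need a \emph{binary} gadget; by domain invariance its signature matrix automatically has the form $A=(a-b)I_\kappa+bJ_\kappa$, which makes the subsequent analysis one-dimensional.

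Second, for the diagonal case $\kappa=r$ interpolation is not needed at all. The paper takes any $\kappa$-regular simple graph $H$ admitting a proper edge $\kappa$-coloring (an explicit construction on $\mathbb{Z}_{\kappa!}$ is given), deletes one edge, and attaches two dangling edges at its endpoints. Because $H$ has an even number of vertices, an alternating-path parity argument shows every proper $\kappa$-coloring of this gadget assigns the two dangling edges the \emph{same} color: the gadget realizes $=_2$ exactly, up to a fixed nonzero multiplier $c$, so the reduction is a direct count-preserving substitution rather than an interpolation.

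Third, for $\kappa>r$ your eigenvalue concern is milder than you anticipate. The matrix $A$ has only two eigenvalues, $a+(\kappa-1)b$ and $a-b$; the paper arranges $a\ne b$ (if the initial gadget happens to have $a=b$, a secondary construction---replacing non-path edges of the gadget by copies of itself---yields signature $(J_\kappa-I_\kappa)^s$, which has $a-b=(-1)^s\ne0$). With both eigenvalues nonzero, repeated Vandermonde columns are simply merged, which is harmless because the target value $\Holant_\Omega$ is the \emph{sum} of all the unknowns $b_{i_1,\ldots,i_\kappa}$. So distinctness is not required, only nonvanishing.
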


\begin{theorem} \label{thm:edge_coloring-regular-simple-planar}
 \#$\kappa$-\textsc{EdgeColoring} is $\SHARPP$-complete
 over planar $r$-regular simple graphs for any integers $\kappa \ge r$ and $r \in \{3, 4, 5\}$.
\end{theorem}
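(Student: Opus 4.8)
The plan is to reduce from the multigraph versions, which are $\SHARPP$-complete by~\cite{cgw-focs-2014}: for Theorem~\ref{thm:edge_coloring-regular-simple-non-planar} from $\#\kappa$-\textsc{EdgeColoring} on $r$-regular \emph{multi}graphs, and for Theorem~\ref{thm:edge_coloring-regular-simple-planar} from its planar analogue with $r\in\{3,4,5\}$. Given an $r$-regular multigraph $G$ with $m:=|E(G)|$, self-loops may be discarded (a loop forces an \AllDistinct\ constraint, hence the count, to $0$), so the only obstruction to simplicity is parallel edges. The idea is to replace \emph{every} edge of $G$ by one fixed simple, $r$-regular gadget with two dangling edges, attaching these to the two original endpoints. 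Since distinct original edges then attach to disjoint sets of fresh internal vertices, the resulting graph is automatically simple and $r$-regular; in the planar case the gadget is inserted into each edge slot of a fixed embedding, so planarity is preserved. A single gadget cannot exactly mimic a true edge (the equality signature), so I would recover the multigraph count by polynomial interpolation over a family of gadgets.

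The spectral engine is the binary disequality signature of \AllDistinct\ at a degree-$2$ vertex, namely the matrix $J-I$ with eigenvalues $\kappa-1$ (once) and $-1$ (multiplicity $\kappa-1$). Degree-$2$ vertices break $r$-regularity, so instead I would use a small simple $r$-regular fragment with two dangling edges: for $r=3$ the graph $K_4$ with one edge deleted and a pendant dangling edge at each degree-deficient vertex; in general $K_{r+1}$ with one edge deleted and two pendants; and for the planar cases $r=4,5$ analogous fragments carved from the octahedron and icosahedron. By the $S_\kappa$ color-symmetry of \AllDistinct, the binary signature of any such gadget depends only on whether its two boundary colors agree, so its matrix has the form $\alpha I+\beta(J-I)$, with eigenvalues $\lambda_+=\alpha+(\kappa-1)\beta$ and $\lambda_-=\alpha-\beta$. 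Chaining $s$ copies in series (identifying a dangling edge of one with a dangling edge of the next) realizes the $s$-th matrix power, with eigenvalues $\lambda_+^{s}$ and $\lambda_-^{s}$, while keeping every vertex at degree $r$.

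Replacing every edge of $G$ by the length-$s$ chain and expanding each edge signature in the two common eigenspaces (the all-ones direction and its complement), the edge-coloring count of the resulting simple graph becomes $\sum_{k=0}^{m} R_k\,\lambda_+^{sk}\lambda_-^{s(m-k)}$, where the coefficients $R_k$ are Holant values with edges replaced by spectral projectors and so depend only on $G$, not on $s$. Up to the known nonzero factor $\lambda_-^{sm}$, this is a univariate polynomial of degree $m$ in $t=(\lambda_+/\lambda_-)^{s}$. Evaluating for $s=1,\dots,m+1$ (for which the $t$ are distinct) and inverting the Vandermonde system recovers all $R_k$; their sum $\sum_k R_k$ is exactly the value at $\lambda_+=\lambda_-=1$, i.e. the edge-coloring count of $G$ with genuine edges. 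This requires $\lambda_+,\lambda_-\neq 0$ and $\lambda_+/\lambda_-$ not a root of unity, which for these gadgets reduces to $\alpha\neq\beta$, $\alpha\neq-(\kappa-1)\beta$, and $|\lambda_+|\neq|\lambda_-|$.

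The main obstacle is constructing, for every admissible pair $(\kappa,r)$, a gadget that is simultaneously simple, $r$-regular at all internal vertices, planar with both dangling edges on the outer face (for Theorem~\ref{thm:edge_coloring-regular-simple-planar}), and spectrally non-degenerate, and then certifying the eigenvalue conditions above across the whole range---most delicately in the tight case $\kappa=r$, where the \AllDistinct\ constraints at degree-$r$ vertices become rigid and an eigenvalue can threaten to vanish. The restriction $r\in\{3,4,5\}$ in Theorem~\ref{thm:edge_coloring-regular-simple-planar} is not a limitation of the method but of the objects: by Euler's formula no planar $r$-regular simple graph exists for $r>5$, so these are exactly the meaningful cases. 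Verifying that no insertion creates a new parallel edge (guaranteed by the use of fresh internal vertices) and that the interpolation systems are invertible for all $\kappa\ge r$ is where the bulk of the case analysis will lie.
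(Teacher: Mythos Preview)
Your plan matches the paper closely: build a two-terminal simple $r$-regular planar gadget from a Platonic solid with one edge deleted (tetrahedron, octahedron, icosahedron for $r=3,4,5$), note that domain invariance forces its signature matrix to have the form $\alpha I + \beta(J-I)$, chain copies in series, and interpolate via the two eigenvalues $\lambda_+=\alpha+(\kappa-1)\beta$ and $\lambda_-=\alpha-\beta$. For $\kappa>r$ this is exactly the paper's argument, modulo one contingency you leave as ``case analysis'': if it happens that $\alpha=\beta$ (so $\lambda_-=0$), the paper repairs the gadget by substituting a fresh copy of itself on every internal edge off a fixed $u$--$v$ path; because $A=\beta J$ is rank one this is equivalent to attaching all-ones unaries, which collapses the gadget to a bare path of $s$ disequalities with matrix $(J-I)^s$ and hence $\alpha-\beta=(-1)^s\neq 0$. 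You will need some such fallback, not just a check.

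The real gap is at $\kappa=r$, and it is not the one you anticipate. The issue is not that an eigenvalue threatens to vanish but that the two eigenvalues \emph{coincide}. Each of these Platonic gadgets has an even number of internal vertices, and an alternating-path parity count (look at the edges colored Red or Green in a putative $r$-coloring with differently colored dangling edges: they form even cycles plus one Red--Green alternating path, and that path would contain an odd number of gadget vertices) forces $\beta=0$ whenever $\kappa=r$. Hence $\lambda_+=\lambda_-=\alpha$, your ratio $\lambda_+/\lambda_-$ is identically $1$, and the Vandermonde system has rank one for every chain length $s$; interpolation recovers nothing. The resolution, which you have not stated, is that $\beta=0$ with $\alpha>0$ means the gadget already \emph{is} binary equality up to the known nonzero scalar $\alpha$, so a single substitution (no interpolation at all) multiplies the multigraph count by $\alpha^{|E(G)|}$. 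The paper separates $\kappa=r$ out as its own theorem and handles it exactly this way, proving $\beta=0$ by direct inspection for $r=3,4$ and by the parity argument for $r=5$.
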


Note that for $r > 5$, simple planar $r$-regular graphs
 \emph{do not} exist, by Euler's formula. See
 Theorem~\ref{thm:edge_coloring:k=r=3_simple}.

Because these are hardness results, the weaker statement also
holds without the restriction on regularity, or on planarity,
thus for any $\kappa \ge \Delta \ge 3$,
\#$\kappa$-\textsc{EdgeColoring} is $\SHARPP$-complete
 for simple graphs with maximum degree  $\Delta$.
Obviously if $\kappa < \Delta$ there are no proper edge $\kappa$-colorings.
For $\Delta \le 2$, the problem can be easily solved in polynomial time,
since for maximum degree at most 2,
 a graph is a disjoint union of cycles and paths.
Thus Theorems~\ref{thm:edge_coloring-regular-simple-non-planar}
and \ref{thm:edge_coloring-regular-simple-planar} settle the complexity of counting
edge colorings for simple graphs
for all parameter settings of $(\kappa, r)$ or $(\kappa, \Delta)$.
%

Our proof of   Theorems~\ref{thm:edge_coloring-regular-simple-non-planar}
and \ref{thm:edge_coloring-regular-simple-planar}
use reductions from the \#P-hardness for edge colorings on
multigraphs. At the heart of the reductions are some geometrically
inspired gadget constructions using the Platonic solids, such as the
icosahedron (see Figure~\ref{fig:icosahedron}).

%
%
%
%
%
%
%

\section{Preliminaries}

Counting $\kappa$-edge colorings can be naturally expressed as 
a Holant problem on a domain of size $\kappa$.
Let us first recap
the framework of Holant problems on the domain $[\kappa] = \{1, 2, \ldots, \kappa\}$.


Fix a positive integer $\kappa \ge 1$.
A constraint function, or  \emph{signature}, $f$ has an arity $n \ge 0$,
and  is a mapping  from
 $[\kappa]^n \rightarrow \mathbb{C}$.
When $n = 0$, we think of $f$ as a scalar.
For the reason of Turing computability we assume all signatures take
complex algebraic values.


A Holant problem $\Holant(\mathcal{F})$ is parameterized by a set of
signatures $\mathcal{F}$.
An input to the problem $\Holant(\mathcal{F})$ is a
 \emph{signature grid} $\Omega = (G, \pi)$ 
consisting of  a (multi)graph $G = (V,E)$,
where $\pi$ assigns to each vertex $v \in V$ some $f_v \in \mathcal{F}$ 
of arity $\deg(v)$,
 and assigns its incident edges 
to its input variables.
\begin{definition}
 Given a set of signatures $\mathcal{F}$,
 we define the counting problem $\Holant(\mathcal{F})$ as:

 Input: A \emph{signature grid} $\Omega = (G, \pi)$;

 Output: $\Holant_\Omega = \sum_{\sigma} \prod_{v \in V} f_v(\sigma \mid_{E(v)})$, where the sum is  over all edge assignments $\sigma: E \to [\kappa]$,
 $E(v)$ denotes the incident edges of $v$ and $\sigma \mid_{E(v)}$ denotes the restriction of $\sigma$ to $E(v)$.
\end{definition}


By default $G$ is a multigraph so it  may have self-loops or parallel edges.
A \emph{simple} graph does not have self-loops or parallel edges.
We may restrict the  Holant problem to
signature grids where input graphs $G$ are simple graphs, or simple planar graphs.
We use  $\PlHolant(\mathcal{F})$ to denote
the problem restricted to planar signature grids.

%

A function $f_v$ can be represented by listing its values according to the lexicographical order of the input tuples 
which is a vector in $\mathbb{C}^{\kappa^{\deg(v)}}$.
A signature is called \emph{symmetric} if it is invariant under
a permutation of its variables. 
An example is the \textsc{Equality} signature $=_r$ of arity $r$
(on domain $[\kappa]$), which outputs value $1$ if all $r$ input values are equal,
and $0$ otherwise.
Here $=_0$ is the scalar $1$;
$=_1$ is the univariate function that takes constant value $1$.
A binary signature, i.e., a signature of arity 2, can 
 be represented by a
$\kappa \times \kappa$  signature matrix.
E.g.,  the binary \textsc{Equality} signature $=_2$
is represented by the identity matrix $I_{\kappa}$.

%
%

Replacing a signature $f \in \mathcal{F}$ by a constant multiple $c f$,
where $c \ne 0$,
does not change the complexity of $\Holant(\mathcal{F})$.
It introduces a global nonzero factor to $\Holant_\Omega$.

We allow $\mathcal{F}$ to be an infinite set.
We say $\Holant(\mathcal{F})$ is (P-time) tractable,
if $\Holant_\Omega$
is  computable in polynomial time 
where the input description of  $\Omega$ includes the signatures used
 in $\Omega$.
We say $\Holant(\mathcal{F})$ is $\SHARPP$-hard if there exists a finite subset of $\mathcal{F}$ for which the problem is $\SHARPP$-hard.
The same definitions apply for $\PlHolant(\mathcal{F})$ when $\Omega$ is
restricted to planar signature grids.
%
We use  $\le_T$ and  $\equiv_T$  to denote
 polynomial time Turing reduction and equivalence, respectively.


We say a signature $f$ is \emph{realizable} or \emph{constructible} from a signature set $\mathcal{F}$
if there is a graph fragment, called a gadget,
which is a graph with some dangling edges such that each vertex is assigned a signature from $\mathcal{F}$,
and the resulting graph defines $f$ by the Holant sum 
with inputs on the dangling edges.
If $f$ is realizable from a set $\mathcal{F}$,
then we can freely add $f$ into $\mathcal{F}$ while preserving the complexity.
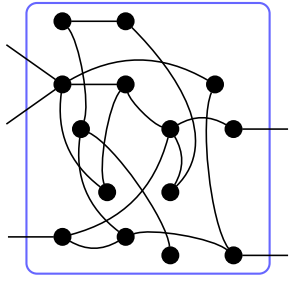
\begin{figure}[t]
 \centering
 \begin{tikzpicture}[scale=\scale,transform shape,node distance=\nodeDist,semithick]
  \node[external]  (0)                     {};
  \node[internal]  (1) [below right of=0]  {};
  \node[external]  (2) [below left  of=1]  {};
  \node[internal]  (3) [above       of=1]  {};
  \node[internal]  (4) [right       of=3]  {};
  \node[internal]  (5) [below       of=4]  {};
  \node[internal]  (6) [below right of=5]  {};
  \node[internal]  (7) [right       of=6]  {};
  \node[internal]  (8) [below       of=6]  {};
  \node[internal]  (9) [below       of=8]  {};
  \node[internal] (10) [right       of=9]  {};
  \node[internal] (11) [above right of=6]  {};
  \node[internal] (12) [below left  of=8]  {};
  \node[internal] (13) [left        of=8]  {};
  \node[internal] (14) [below left  of=13] {};
  \node[external] (15) [left        of=14] {};
  \node[internal] (16) [below left  of=5]  {};
  \path let
         \p1 = (15),
         \p2 = (0)
        in
         node[external] (17) at (\x1, \y2) {};
  \path let
         \p1 = (15),
         \p2 = (2)
        in
         node[external] (18) at (\x1, \y2) {};
  \node[external] (19) [right of=7]  {};
  \node[external] (20) [right of=10] {};
  \path (1) edge                             (5)
            edge[bend left]                 (11)
            edge[bend right]                (13)
            edge node[near start] (e1) {}   (17)
            edge node[near start] (e2) {}   (18)
        (3) edge                             (4)
        (4) edge[out=-45,in=45]              (8)
        (5) edge[bend right, looseness=0.5] (13)
            edge[bend right, looseness=0.5]  (6)
        (6) edge[bend left]                  (8)
            edge[bend left]                  (7)
            edge[bend left]                 (14)
        (7) edge node[near start] (e3) {}   (19)
       (10) edge[bend right, looseness=0.5] (12)
            edge[bend left,  looseness=0.5] (11)
            edge node[near start] (e4) {}   (20)
       (12) edge[bend left]                 (16)
       (14) edge node[near start] (e5) {}   (15)
            edge[bend right]                (12)
       (16) edge[bend left,  looseness=0.5]  (9)
            edge[bend right, looseness=0.5]  (3);
  \begin{pgfonlayer}{background}
   \node[draw=\borderColor,thick,rounded corners,fit = (3) (4) (9) (e1) (e2) (e3) (e4) (e5), transform shape=false] {};
  \end{pgfonlayer}
 \end{tikzpicture}
 \caption{\label{fig:Fgate}An $\mathcal{F}$-gate with 5 dangling edges}
\end{figure}
Formally,
such a notion is defined by an $\mathcal{F}$-gate~\cite{cai-chen-book}. 
An $\mathcal{F}$-gate is similar to a signature grid $(G, \pi)$ for $\Holant(\mathcal{F})$ except that $G = (V,E,D)$ is a graph with some dangling edges $D$.
The dangling edges define external variables for the $\mathcal{F}$-gate.
(See Figure~\ref{fig:Fgate} for an example.)
We denote the regular edges in $E$ by $1, 2, \dotsc, m$ and the dangling edges in $D$ by $m+1, \dotsc, m+n$.
Then we can define a function $\Gamma$ for this $\mathcal{F}$-gate as
\[
 \Gamma(y_1, \dotsc, y_n) = \sum_{x_1, \dotsc, x_m \in [\kappa]} H(x_1, \dotsc, x_m, y_1, \dotsc, y_n),
\]
where $(y_1, \dotsc, y_n) \in [\kappa]^n$ is an assignment on the dangling edges
and $H(x_1, \dotsc, x_m, y_1, \dotsc, y_n)$ is the value of the signature grid on an assignment of all edges in $G$,
which is the product of evaluations at all internal vertices.
We also call this function $\Gamma$ the signature of the $\mathcal{F}$-gate,
or informally, of the gadget. 

An $\mathcal{F}$-gate is planar if the underlying graph $G$ is a planar graph,
and the dangling edges,
ordered counterclockwise corresponding to the order of the input variables,
are in the outer face in a planar embedding.
A (planar) $\mathcal{F}$-gate can be used in a (planar) signature grid as if it is just a single vertex with the particular signature.
An $\mathcal F$-gate is simple if the underlying graph is simple.

Using the idea of planar $\mathcal{F}$-gates,
we can reduce one planar Holant problem to another.
Suppose $g$ is the signature of some planar $\mathcal{F}$-gate.
Then $\PlHolant(\mathcal{F} \cup \{g\}) \leq_T \PlHolant(\mathcal{F})$.
The reduction is simple.
Given an instance of $\PlHolant(\mathcal{F} \cup \{g\})$,
by replacing every appearance of $g$ by the $\mathcal{F}$-gate,
we get an instance of $\PlHolant(\mathcal{F})$.
Since the signature of the $\mathcal{F}$-gate is $g$,
the Holant values for these two signature grids are identical.




An arity $r$ signature on domain size $\kappa$ is fully specified by $\kappa^r$ values.
A \emph{symmetric signature} of arity $r$ on domain $[\kappa]$
can be specified by $\binom{r + \kappa -1}{\kappa -1}$ values, 
namely the signature value is determined by how many variables ($r_1 \ge 0$)
take the first value in $[\kappa]$, 
and how many variables ($r_2 \ge 0$)
take the second value in $[\kappa]$, etc., such that
$r_1 + r_2 + \ldots  + r_{\kappa} = r$. 

The signature that defines the problem of edge coloring
is the {\sc All-Distinct} function.
The signature $\AllDistinct_{r,\kappa}$, or $\AD_{r,\kappa}$ for short, has 
arity $r$ on domain size $\kappa$. It outputs value~$1$ 
when all inputs are distinct and~$0$ otherwise.
Here $\AD_0$ is the scalar $1$;
$\AD_1$ is the univariate function that takes constant value $1$.
%
When $\mathcal F = \{ \AD_{r,\kappa} \mid r \ge 0 \}$,
the problem $\Holant(\mathcal F)$ is called $\#\kappa$-\textsc{EdgeColoring}.

We prove our \#P-completeness results for  counting
edge colorings over (regular) simple graphs by
reducing from the corresponding results for multigraphs 
(Theorem 1.1 in \cite{cgw-focs-2014}).
We restate it below for easy reference.
\begin{theorem}\label{thm:edge_coloring}
$\#\kappa$-\textsc{EdgeColoring} is $\SHARPP$-complete over planar $r$-regular multigraphs for any integers $\kappa \ge r \ge 3$.
\end{theorem}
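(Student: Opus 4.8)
The plan is to realize $\#\kappa$-\textsc{EdgeColoring} on $r$-regular multigraphs as the Holant problem $\Holant(\{\AD_{r,\kappa}\})$ on the domain $[\kappa]$ and to prove $\SHARPP$-hardness entirely inside the Holant framework, combining gadget constructions, polynomial interpolation, and holographic transformations. Membership in $\SHARPP$ is immediate, since each proper coloring contributes $1$ to the Holant sum and the number of colorings is at most $\kappa^{|E|}$; so the content is the lower bound.

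First I would try to collapse the two-parameter family onto a small set of base cases. The signature $\AD_{r,\kappa}$ is invariant under the diagonal action of the color group $S_\kappa$, so every gadget assembled from copies of it again carries this symmetry; this sharply restricts the realizable signatures but also makes them easy to parameterize. Building planar chains and trees of $\AD$-vertices with dangling edges, I would produce a controllable family of binary gadgets whose signatures are the $S_\kappa$-invariant $\kappa \times \kappa$ matrices $a I_\kappa + b J_\kappa$ (with $J_\kappa$ the all-ones matrix), and then iterate a fixed such gadget to realize all powers $(a I_\kappa + b J_\kappa)^t$. These matrices are simultaneously diagonalizable with the two eigenvalues $a + \kappa b$ and $a$, so a Vandermonde/interpolation argument recovers, up to $\SHARPP$-equivalence, the Holant value for any target signature in this two-dimensional algebra --- in particular signatures that implement pinning or partial identification of colors, which is what lets me move the parameters toward the diagonal case $\kappa = r$.

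Second, it remains to establish the base case, say $\kappa = r = 3$: counting proper $3$-edge-colorings of planar cubic graphs, equivalently the number of partitions of the edge set into three perfect matchings. I would obtain this through a holographic transformation relating $\Holant(\{\AD_{3,3}\})$ to a Holant or $\#\mathrm{CSP}$ problem whose $\SHARPP$-hardness is already known --- for instance a hard Tutte-polynomial evaluation on planar graphs --- and reduce from the latter. The interpolation of the previous step is exactly what makes the overall reduction counting-preserving rather than merely solution-preserving, which is essential here: as the introduction shows, no parsimonious reduction to edge coloring can exist unless $\mathrm{P}=\mathrm{NP}$, so $\SHARPP$-hardness must be routed through interpolation of Holant values rather than through a one-to-one correspondence of colorings.

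The step I expect to be the main obstacle is escaping the $S_\kappa$-symmetric family of realizable signatures. Because $\AD$ is simultaneously highly symmetric and $0/1$-valued, almost every natural gadget collapses back into the algebra spanned by $I_\kappa$ and $J_\kappa$, whose Holant problems are too structured to be hard on their own; the crux is to find a planar, degree-preserving gadget whose signature genuinely leaves this algebra, or to pair the interpolated binary signatures with one cleverly chosen asymmetric gadget, so as to encode a provably $\SHARPP$-hard problem. A secondary difficulty, to be checked at every substitution, is that each gadget must keep the ambient multigraph planar and exactly $r$-regular.
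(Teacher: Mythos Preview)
This theorem is not proved in the present paper. It is explicitly introduced as a restatement of Theorem~1.1 from~\cite{cgw-focs-2014} and is used here only as a black box: the paper's contribution is to reduce \emph{from} this multigraph result \emph{to} the simple-graph setting via the gadget constructions in Sections~\ref{sec:kappa=r} and~\ref{sec:kappa>r}. There is therefore no ``paper's own proof'' to compare your proposal against.

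That said, your sketch is broadly in the spirit of how~\cite{cgw-focs-2014} actually proceeds --- Holant formulation, domain-invariant binary gadgets with matrices in $\Span\{I_\kappa, J_\kappa\}$, Vandermonde interpolation, and holographic reductions to known $\SHARPP$-hard problems --- so it is a reasonable outline of the cited argument rather than a new one. Your identification of the bottleneck (escaping the $S_\kappa$-invariant algebra) is accurate; in the cited work this is handled not by a single asymmetric gadget but by an elaborate sequence of reductions and interpolations specific to each parameter regime, which is why the proof there is long. Your outline does not yet contain the concrete constructions or the base-case hardness argument, so as a stand-alone proof it would have substantial gaps, but as a reading guide to~\cite{cgw-focs-2014} it is on target. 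For the purposes of the present paper you may simply cite the result.
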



In this paper, we show two results,
Theorems~\ref{thm:edge_coloring-regular-simple-non-planar}
 and
\ref{thm:edge_coloring-regular-simple-planar},
about
the counting complexity of edge colorings over (regular) simple graphs,
one 
for the planar case and the other for the general (i.e., not necessarily
planar) case.
%
%
%
As noted before, the only interesting cases are for $\kappa  \ge r \ge 3$,
and in the planar case for $\kappa  \ge r = 3, 4, 5$.
We handle both the planar and nonplanar cases
together, but we first prove it for the case $\kappa = r$ 
in Section~\ref{sec:kappa=r}
and then for the case $\kappa > r$ in Section~\ref{sec:kappa>r}.

We remark that when the underlying graph has a loop
then there is no proper edge colorings so the corresponding Holant value is $0$.
Hence we may assume that our reduction proof for 
Theorems~\ref{thm:edge_coloring-regular-simple-non-planar}
 and 
\ref{thm:edge_coloring-regular-simple-planar}
 starts
with  the problem of edge colorings on loopless multigraphs in 
Theorem~\ref{thm:edge_coloring}.

\section{Case \texorpdfstring{$\kappa = r$}{kappa = r}}\label{sec:kappa=r}

We start with the statement of  Theorem~\ref{thm:edge_coloring}
specialized with $\kappa = r \ge 3$.
%
%
We want to prove a corresponding statement
for simple graphs,
i.e., graphs without self-loops or parallel edges.
We first recall a well-known fact that
for $r > 5$, there are \emph{no} planar $r$-regular simple graphs.
This is a topological fact from Euler's formula; for completeness
we include a simple proof.
For any simple plane graph $G$, i.e., a planar graph with
a planar embedding,  let 
${\sf V}$, ${\sf E}$ and ${\sf F}$  be respectively the number of vertices,
edges and faces formed by $G$. Then Euler's formula says that
\[{\sf V} - {\sf E} + {\sf F} = 2.\]
If $G$ is an  $r$-regular simple graph,
then counting the ends of every edge once
we get $r {\sf V} = 2 {\sf E}$.
Similarly $r {\sf V} \ge 3 {\sf F}$, since at every vertex
we encounter $r$ incident faces, but each face is counted at least three times
on a simple graph $G$. Hence $(1 - \frac{r}{2} +  \frac{r}{3}) {\sf V} \ge 2$.
In particular $r \le 5$.

For $\kappa \in \{3,4,5\}$ we prove
the following for planar $\kappa$-regular simple graphs.

\begin{theorem} \label{thm:edge_coloring:k=r=3_simple}
 \#$\kappa$-\textsc{EdgeColoring} is $\SHARPP$-complete 
over planar $\kappa$-regular simple graphs, for $\kappa \in \{3,4,5\}$.
\end{theorem}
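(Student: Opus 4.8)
I would prove Theorem~\ref{thm:edge_coloring:k=r=3_simple} by a Turing reduction from the multigraph version, Theorem~\ref{thm:edge_coloring} specialized to $\kappa=r$, which (as noted in the excerpt) we may take to be the problem of counting $\kappa$-edge-colorings of \emph{loopless} planar $\kappa$-regular multigraphs. The idea is to replace every vertex of the input multigraph $G$ by a single \emph{planar simple $\mathcal F$-gate} $\Gamma_\kappa$ over $\mathcal F=\{\AD_{r,\kappa}\}$, built from the $\kappa$-regular Platonic solid. I want $\Gamma_\kappa$ to satisfy three properties: (i) it is planar and simple, with exactly $\kappa$ dangling edges, each emanating from a \emph{distinct} internal vertex; (ii) every internal vertex has degree $\kappa$ and carries $\AD_{\kappa,\kappa}$, and each of the $\kappa$ boundary vertices has internal degree $\kappa-1$, so that after the dangling edges are connected the whole substituted graph is again $\kappa$-regular; and (iii) the signature of $\Gamma_\kappa$ equals $c_\kappa\cdot\AD_{\kappa,\kappa}$ for some nonzero constant $c_\kappa$. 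Given such a gadget, substituting $\Gamma_\kappa$ for each vertex and joining dangling edges according to the edges of $G$ yields, by the $\mathcal F$-gate composition argument, a signature grid on a simple planar $\kappa$-regular graph $G'$ with
\[
\#\kappa\text{-colorings}(G')=c_\kappa^{\,|V(G)|}\cdot\#\kappa\text{-colorings}(G),
\]
and since $c_\kappa\neq 0$ this is a valid polynomial-time reduction; membership in $\SHARPP$ is immediate.

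\textbf{The gadget.} I take $\Gamma_\kappa$ to be the appropriate Platonic solid with one vertex $v_0$ deleted: the tetrahedron for $\kappa=3$, the octahedron for $\kappa=4$, and the icosahedron for $\kappa=5$ (all $\kappa$-regular planar simple graphs). Deleting $v_0$ drops its $\kappa$ neighbors $w_1,\dots,w_\kappa$ to internal degree $\kappa-1$; reattaching the $\kappa$ edges formerly incident to $v_0$ as dangling edges makes each $w_i$ degree $\kappa$ again and leaves all other vertices at degree $\kappa$, giving property (ii). In a planar embedding the $\kappa$ dangling edges stick out into the face vacated by $v_0$, in the same cyclic order that $w_1,\dots,w_\kappa$ occurred around $v_0$; this lets me match them to the rotation order of the edges at the vertex of $G$ being replaced, so planarity is preserved. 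Finally, because the $\kappa$ dangling edges issue from \emph{distinct} internal vertices, two parallel edges of $G$ between $u$ and $v$ become edges joining different pairs of internal vertices of $\Gamma_u$ and $\Gamma_v$, hence are no longer parallel; as $G$ is loopless no self-loops arise either, so $G'$ is simple, giving property~(i).

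\textbf{The signature identity, and the main obstacle.} Property (iii) is the crux. One direction is easy: each solid is Class~1 (has a proper $\kappa$-edge-coloring), so fixing one such coloring and applying the $\kappa!$ global permutations of the color set produces, on the boundary $(w_1,\dots,w_\kappa)$, every \emph{rainbow} (all-distinct) assignment exactly once up to completions; since a global color permutation is a bijection on completions, the gadget signature takes a common nonzero value $c_\kappa$ on all rainbow inputs. The genuinely nontrivial direction is that the signature \emph{vanishes} on every non-rainbow boundary, i.e.\ that a proper $\kappa$-edge-coloring of the solid-minus-$v_0$ is forced to color the $\kappa$ former-central edges with distinct colors even though the $\AD_\kappa$ constraint at $v_0$ has been removed. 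For $\kappa=3$ this is a short finite check: the tetrahedron-minus-a-vertex is a triangle with one dangling edge per vertex, and a direct evaluation shows its signature is exactly $\AD_{3,3}$ (so $c_3=1$). For $\kappa=4$ a similar, slightly longer case analysis on the octahedron confirms that every proper coloring has rainbow boundary. The hard part is $\kappa=5$: here I would establish the same vanishing statement for the icosahedron by exploiting its rich automorphism group to cut the case analysis down to a few orbit representatives (optionally supplemented by a finite verification), since the argument is not merely a parity count and is where the geometry of the solid does the real work. Once property (iii) is in hand, the displayed identity completes the reduction and hence the $\SHARPP$-completeness for all three values $\kappa\in\{3,4,5\}$.
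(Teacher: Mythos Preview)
Your approach is correct but takes a different route from the paper. The paper replaces each \emph{edge} of the multigraph by a binary gadget --- the relevant Platonic solid with one \emph{edge} removed and two dangling edges attached at its former endpoints --- and shows its signature is a nonzero multiple of $(=_2)$ via a two-color alternating-path parity argument exploiting that $|V|$ is even. You instead replace each \emph{vertex} by a $\kappa$-ary gadget --- the solid with one \emph{vertex} removed and $\kappa$ dangling edges at its former neighbors --- and aim for a nonzero multiple of $\AD_{\kappa,\kappa}$. Both routes deliver the reduction; your vertex-gadget handles simplicity just as cleanly, since parallel edges of $G$ now join distinct boundary vertices of disjoint gadget copies, and the planarity argument via matching rotation orders is standard.

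One point worth sharpening: contrary to what you write, the vanishing on non-rainbow boundaries for $\kappa=5$ \emph{is} a parity count, and in fact a single uniform argument covers all three cases. If the $\kappa$ dangling colors are not all distinct, then by pigeonhole some color $c\in[\kappa]$ is absent among them. Every internal vertex of $\Gamma_\kappa$ has degree exactly $\kappa$, hence is incident to exactly one color-$c$ edge; since no dangling edge carries color $c$, all these color-$c$ edges are internal, so they form a perfect matching on the internal vertex set. But that set has size $|V|-1$, which is odd because the tetrahedron, octahedron, and icosahedron have $|V|=4,6,12$ respectively --- contradiction. This replaces your explicit case analyses for $\kappa=3,4$ and your proposed automorphism-assisted finite verification for $\kappa=5$, and is the natural dual of the paper's two-color alternating argument on the edge-removed solid.
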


\begin{proof}
We reduce the problem of \#$\kappa$-\textsc{EdgeColoring}
over planar $\kappa$-regular 
 multigraphs to planar $\kappa$-regular simple graphs 
in polynomial time, for $\kappa \in \{3,4,5\}$.

Let $G$ be any planar $\kappa$-regular
 multigraph. As mentioned before, we may also assume $G$ has no self-loops.
%
We will replace
every edge of $G$ by a suitable gadget with two external 
edges to obtain a planar 
$\kappa$-regular simple
graph $G'$. 
The key property of the  gadget is as follows:
Every proper edge $\kappa$-coloring assigns the same color to
the two external edges $e_1$ and $e_2$, and such
a coloring exists.
Note that, by permuting the set $[\kappa]$,
 for any fixed color value
for  $e_1$ and $e_2$ there are the same number $c$ of extensions to a
proper edge $\kappa$-coloring of the gadget.
With this property, the number of proper edge $\kappa$-colorings
of $G'$ is $c^{\sf E}$ times that of $G$, where ${\sf E}$
denotes the number of edges of $G$.

For $\kappa =3$, consider the gadget $H_3$ in Figure~\ref{fig:H3}.
\begin{figure}
\centering
\subfloat[An edge in $G$]{
\begin{tikzpicture}[scale=0.4]
\node [external] (00) at (1, -2) {}; 
\node [internal, scale=0.6] (0) at (0, 3) {};
\node [internal, scale=0.6] (1) at (6, 3) {};
\draw  (0) to  (1);
\end{tikzpicture}
}
\qquad
\qquad
\subfloat[Gadget $H_3$]{
\begin{tikzpicture}[scale=0.4]
\node at (6.5, 3) {$e_5$};
\node at (3.5, 5) {$e_3$};
\node at (3.5, 1) {$e_4$};
\node at (0, 3.5) {$e_1$};
\node at (12, 3.5) {$e_2$};
\node at (8.7, 5) {$e_6$};
\node at (8.7, 1) {$e_7$};
\node [external] (00) at (0, -2) {}; 
\node [internal, scale=0.6] (0) at (-2, 3) {};
\node [internal, scale=0.6] (1) at (2, 3) {};
\node [internal, scale=0.6] (2) at (10, 3) {};
\node [internal, scale=0.6] (3) at (14, 3) {};
\node [internal, scale=0.6] (4) at (6, -1) {};
\node [internal, scale=0.6] (5) at (6, 7) {};
\draw  (0) to  (1);
\draw  (2) to  (3);
\draw  (4) to  (1);
\draw  (4) to  (2);
\draw  (4) to  (5);
\draw  (5) to  (1);
\draw  (5) to  (2);
\end{tikzpicture}
}
 \caption{\label{fig:H3}The gadget for 3-regular graphs}
\end{figure}
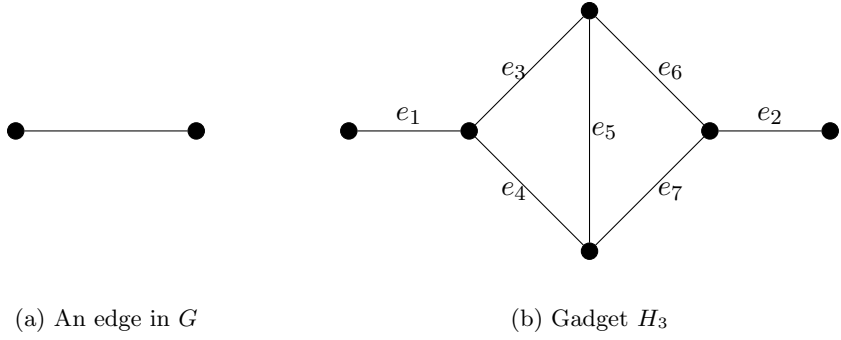
Replacing every edge of $G$ by $H_3$ we get a planar $3$-regular simple 
graph $G'$. 
To prove the key property for $H_3$, suppose $e_1$ is assigned a color
 {\sl Red},
then its two adjacent edges $e_3$ and $e_4$ must be assigned the other two
colors {\sl Blue} and {\sl Green} in some order.
Then the middle vertical edge $e_5$ is assigned {\sl Red}, and its remaining
two adjacent edges $e_6$ and $e_7$
 must be assigned {\sl Blue} and {\sl Green}  in a unique way (depending
on the colors of $e_3$ and $e_4$).
It follows that $e_2$ is also assigned  {\sl Red}, and we found 
exactly $c=2$ proper edge $3$-colorings.  

For $\kappa =4$, consider the gadget $H_4$ in Figure~\ref{fig:H4}.
\begin{figure}
\centering
\subfloat[An edge in $G$]{
\begin{tikzpicture}[scale=0.35]
\node [external] (00) at (1, -2) {}; 
\node [internal, scale=0.6] (0) at (0, 3) {};
\node [internal, scale=0.6] (1) at (6, 3) {};
\draw  (0) to  (1);
\end{tikzpicture}
}
\qquad
\qquad
\subfloat[Gadget $H_4$]{
\begin{tikzpicture}[scale=0.35]
\node at (10, 3.4) {$e$};
\node at (6, 5.2) {$e_5$};
\node at (8.5,5.2) {$e_3$};
\node at (10.2,5.2) {$e_4$};
\node at (12,5.2) {$e_6$};
\node at (2,3.4) {$e_1$};
\node at (18,3.4) {$e_2$};
\node [external] (00) at (1, -2) {}; 
\node [internal, scale=0.6] (0) at (0, 3) {};
\node [internal, scale=0.6] (1) at (4, 3) {};
\node [internal, scale=0.6] (2) at (8, 3) {};
\node [internal, scale=0.6] (3) at (12, 3) {};
\node [internal, scale=0.6] (4) at (16, 3) {};
\node [internal, scale=0.6] (5) at (20, 3) {};
\node [internal, scale=0.6] (6) at (10, -1) {};
\node [internal, scale=0.6] (7) at (10, 7) {};
\draw  (0) to  (1);
\draw  (1) to  (2);
\draw  (2) to  (3);
\draw  (3) to  (4);
\draw  (4) to  (5);
\draw  (6) to  (1);
\draw  (6) to  (2);
\draw  (6) to  (3);
\draw  (6) to  (4);
\draw  (7) to  (1);
\draw  (7) to  (2);
\draw  (7) to  (3);
\draw  (7) to  (4);
\end{tikzpicture}
}
 \caption{\label{fig:H4}The gadget for 4-regular graphs}
\end{figure}
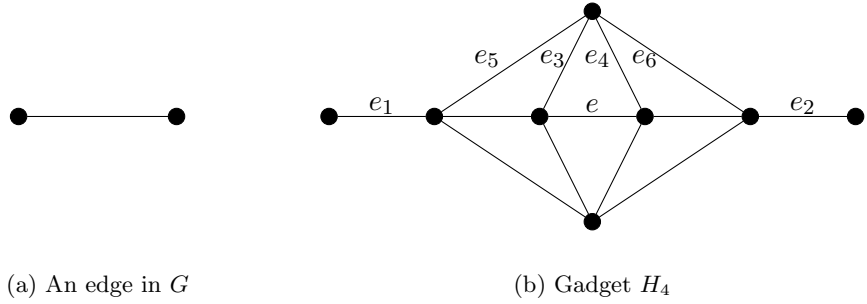
Replacing every edge of $G$ by $H_4$ we get a simple planar $4$-regular 
graph $G'$.
Suppose the middle horizontal edge $e$ is assigned {\sl Red}.
Then $e_3$ and $e_4$ are assigned two other
colors, say {\sl Blue} and {\sl Green}, in some order.
Let the 4th color be {\sl Yellow}. 
Then  $e_5$ and $e_6$ must be assigned  
  {\sl Yellow} and {\sl Red} in some order.
In particular exactly one of  $e_5$ or $e_6$ is assigned {\sl Yellow}.
It can be easily shown
 that the color ({\sl Blue} or {\sl Green})
 of the edge ($e_3$ or $e_4$) that is on the same internal
face as this edge colored {\sl Yellow}  ($e_5$ or $e_6$) 
 must be the color of both the external
edges $e_1$ and $e_2$, 
and once $e_5$ or  $e_6$ is chosen to be colored  {\sl Yellow}
 there is a unique extension.
 This implies that  the gadget $H_4$ satisfies the 
key property with the constant $c=12$: If the two 
external edges  $e_1$ and $e_2$
are to be {\sl Blue}, say, then  $e$ must be not {\sl Blue}, and one
of $e_3$ and $e_4$
must be {\sl Blue}, and the rest are all forced, with 
the unique choice of $e_5$ or  $e_6$ that shares a face with
the {\sl Blue} edge in $\{e_3, e_4\}$ colored  {\sl Yellow}.
 Thus there are
$3\times 2 \times 2 = 12$ choices.
%

We remark that the gadgets $H_3$ and $H_4$ are obtained from
breaking one edge in a tetrahedron and an octahedron, respectively. 

For $\kappa = 5$, we use a similar construction,
but we will argue slightly differently.
Let $H$ be any plane $5$-regular simple graph that can be
edge $5$-colored. Such graphs exist; for concreteness
we can take the icosahedron which is shown in Figure~\ref{fig:icosahedron}
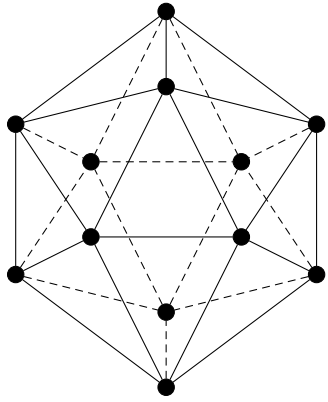
\begin{figure}
\centering
\begin{tikzpicture}[scale=0.5]
\node [external] (00) at (1, -2) {}; 
\node [internal, scale=0.6] (a1) at (4, 10) {};
\node [internal, scale=0.6] (a2) at (0, 7) {};
\node [internal, scale=0.6] (a3) at (0, 3) {};
\node [internal, scale=0.6] (a4) at (4, 0) {};
\node [internal, scale=0.6] (a5) at (8, 3) {};
\node [internal, scale=0.6] (a6) at (8, 7) {};
\node [internal, scale=0.6] (b1) at (4, 8) {};
\node [internal, scale=0.6] (b2) at (2, 4) {};
\node [internal, scale=0.6] (b3) at (6, 4) {};
\node [internal, scale=0.6] (c1) at (4, 2) {};
\node [internal, scale=0.6] (c2) at (2, 6) {};
\node [internal, scale=0.6] (c3) at (6, 6) {};
\draw [densely dashed] (a1) to (c2);
\draw [densely dashed] (a1) to (c3);
\draw[densely dashed] (a2) to (c2);
\draw[densely dashed] (a3) to (c1);
\draw[densely dashed] (a3) to (c2);
\draw[densely dashed] (a4) to (c1);
\draw[densely dashed] (a5) to (c1);
\draw[densely dashed] (a5) to (c3);
\draw[densely dashed] (a6) to (c3);
\draw[densely dashed] (c1) to (c2);
\draw[densely dashed] (c2) to (c3);
\draw[densely dashed] (c1) to (c3);
\draw  (a1) to (a2);
\draw  (a1) to (a6);
\draw  (a1) to (b1);
\draw  (a2) to (b1);
\draw  (a2) to (b2);
\draw  (a2) to (a3);
\draw  (a3) to (a4);
\draw  (a3) to (b2);
\draw  (a4) to (b2);
\draw  (a4) to (b3);
\draw  (a4) to (a5);
\draw  (a5) to (a6);
\draw  (a5) to (b3);
\draw  (a6) to (b1);
\draw  (a6) to (b3);
\draw  (b1) to (b3);
\draw  (b2) to (b3);
\draw  (b1) to (b2);
\end{tikzpicture}
 \caption{\label{fig:icosahedron}An icosahedron}
\end{figure} 
and a specific 5-coloring can be seen
in Figure~\ref{fig:icosahedron-colored}.
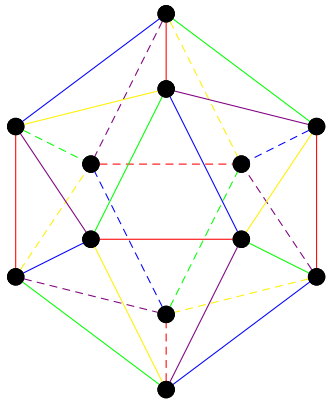
\begin{figure}[t]
\centering
\begin{tikzpicture}[scale=0.5]
\node [external] (00) at (1, -2) {}; 
\node [internal, scale=0.6] (a1) at (4, 10) {};
\node [internal, scale=0.6] (a2) at (0, 7) {};
\node [internal, scale=0.6] (a3) at (0, 3) {};
\node [internal, scale=0.6] (a4) at (4, 0) {};
\node [internal, scale=0.6] (a5) at (8, 3) {};
\node [internal, scale=0.6] (a6) at (8, 7) {};
\node [internal, scale=0.6] (b1) at (4, 8) {};
\node [internal, scale=0.6] (b2) at (2, 4) {};
\node [internal, scale=0.6] (b3) at (6, 4) {};
\node [internal, scale=0.6] (c1) at (4, 2) {};
\node [internal, scale=0.6] (c2) at (2, 6) {};
\node [internal, scale=0.6] (c3) at (6, 6) {};
\draw[violet] [densely dashed] (a1) to (c2);
\draw[yellow] [densely dashed] (a1) to (c3);
\draw[green]  [densely dashed] (a2) to (c2);
\draw[violet] [densely dashed] (a3) to (c1);
\draw[yellow] [densely dashed] (a3) to (c2);
\draw[red] [densely dashed] (a4) to (c1);
\draw[yellow] [densely dashed] (a5) to (c1);
\draw[violet] [densely dashed] (a5) to (c3);
\draw[blue] [densely dashed] (a6) to (c3);
\draw[blue] [densely dashed] (c1) to (c2);
\draw[red] [densely dashed] (c2) to (c3);
\draw[green] [densely dashed] (c1) to (c3);
\draw[blue]  (a1) to (a2);
\draw[green]  (a1) to (a6);
\draw[red]  (a1) to (b1);
\draw[yellow]  (a2) to (b1);
\draw[violet]  (a2) to (b2);
\draw[red]  (a2) to (a3);
\draw[green]  (a3) to (a4);
\draw[blue]  (a3) to (b2);
\draw[yellow]  (a4) to (b2);
\draw[violet]  (a4) to (b3);
\draw[blue]  (a4) to (a5);
\draw[red]  (a5) to (a6);
\draw[green]  (a5) to (b3);
\draw[violet]  (a6) to (b1);
\draw[yellow]  (a6) to (b3);
\draw[blue]  (b1) to (b3);
\draw[red]  (b2) to (b3);
\draw[green]  (b1) to (b2);
\node [internal, scale=0.6] (a1) at (4, 10) {};
\node [internal, scale=0.6] (a2) at (0, 7) {};
\node [internal, scale=0.6] (a3) at (0, 3) {};
\node [internal, scale=0.6] (a4) at (4, 0) {};
\node [internal, scale=0.6] (a5) at (8, 3) {};
\node [internal, scale=0.6] (a6) at (8, 7) {};
\node [internal, scale=0.6] (b1) at (4, 8) {};
\node [internal, scale=0.6] (b2) at (2, 4) {};
\node [internal, scale=0.6] (b3) at (6, 4) {};
\node [internal, scale=0.6] (c1) at (4, 2) {};
\node [internal, scale=0.6] (c2) at (2, 6) {};
\node [internal, scale=0.6] (c3) at (6, 6) {};
\end{tikzpicture}
\caption{\label{fig:icosahedron-colored}A possible proper edge $5$-coloring of the icosahedron}
\end{figure} 
Because $H$ is $5$-regular, $5 {\sf V} = 2 {\sf E}$,
so ${\sf V}$ is even. Now remove one edge of $H$ on the external face,
say between vertices $u$ and $v$.  This defines a graph $H'$.
Define $H_5$ to be the graph obtained from $H'$ by adding two
external edges $e_1$ and $e_2$
 from $u$ and $v$ (in the case of the icosahedron 
see Figures~\ref{fig:icosahedron-break-one-edge}
\begin{figure}
\centering
\subfloat[An edge in $G$]{
\begin{tikzpicture}[scale=0.45]
\node [external] (00) at (1, -2) {}; 
\node [internal, scale=0.6] (0) at (0, 5) {};
\node [internal, scale=0.6] (1) at (6, 5) {};
\draw  (0) to  (1);
\end{tikzpicture}
}
\qquad
\qquad
\subfloat[Gadget $H_5$]{
\begin{tikzpicture}[scale=0.45]
\node at (-1, 9.5) {$e_1$};
\node at (15, 9.5) {$e_2$};
\node [external] (00) at (0, -2) {}; 
\node [internal, scale=0.6] (a1) at (-2, 10) {};
\node [internal, scale=0.6] (a2) at (0, 10) {};
\node [internal, scale=0.6] (a3) at (14, 10) {};
\node at (0, 10.8) {$u$};
\node at (14, 10.8) {$v$};
\node [internal, scale=0.6] (a4) at (16, 10) {};
\node [internal, scale=0.6] (a5) at (7, 0) {};
\draw  (a1) to  (a2);
\draw  (a2) to  (a5);
\draw  (a5) to  (a3);
\draw  (a4) to  (a3);
\node [internal, scale=0.6] (b1) at (7, 9) {};
\node [internal, scale=0.6] (b2) at (5, 7.7) {};
\node [internal, scale=0.6] (b3) at (5, 5.5) {};
\node [internal, scale=0.6] (b4) at (7, 4.2) {};
\node [internal, scale=0.6] (b5) at (9, 7.7) {};
\node [internal, scale=0.6] (b6) at (9, 5.5) {};
\draw  (a2) to  (b1);
\draw  (a2) to  (b2);
\draw  (a2) to  (b3);
\draw  (a3) to  (b1);
\draw  (a3) to  (b5);
\draw  (a3) to  (b6);
\draw  (a5) to  (b3);
\draw  (a5) to  (b4);
\draw  (a5) to  (b6);
\node [internal, scale=0.6] (c1) at (7, 7.5) {};
\node [internal, scale=0.6] (c2) at (6, 6) {};
\node [internal, scale=0.6] (c3) at (8, 6) {};
\draw  (c1) to  (c2);
\draw  (c2) to  (c3);
\draw  (c3) to  (c1);
\draw  (b1) to  (b2);
\draw  (b1) to  (b5);
\draw  (b1) to  (c1);
\draw  (b2) to  (b3);
\draw  (b2) to  (c1);
\draw  (b2) to  (c2);
\draw  (b3) to  (b4);
\draw  (b3) to  (c2);
\draw  (b4) to  (c2);
\draw  (b4) to  (c3);
\draw  (b4) to  (b6);
\draw  (b6) to  (b5);
\draw  (b6) to  (c3);
\draw  (b5) to  (c1);
\draw  (b5) to  (c3);
\end{tikzpicture}
}
 \caption{\label{fig:icosahedron-break-one-edge}The gadget for 5-regular graphs}
\end{figure} 
and \ref{fig:icosahedron-gadget-colored}). 
\begin{figure}[t]
\centering
\begin{tikzpicture}[scale=0.45]
\tikzstyle{s1} = [red]
\tikzstyle{s2} = [green]
\tikzstyle{s3} = [blue]
\tikzstyle{s4} = [yellow]
\tikzstyle{s5} = [violet]
\node at (-1, 9.5) {$e_1$};
\node at (15, 9.5) {$e_2$};
\node [external] (00) at (0, -2) {}; 
\node [internal, scale=0.6] (a1) at (-2, 10) {};
\node [internal, scale=0.6] (a2) at (0, 10) {};
\node [internal, scale=0.6] (a3) at (14, 10) {};
\node at (0, 10.8) {$u$};
\node at (14, 10.8) {$v$};
\node [internal, scale=0.6] (a4) at (16, 10) {};
\node [internal, scale=0.6] (a5) at (7, 0) {};
\draw[s1]  (a1) to  (a2);
\draw[s3]  (a2) to  (a5);
\draw[s2]  (a5) to  (a3);
\draw[s1]  (a4) to  (a3);
\node [internal, scale=0.6] (b1) at (7, 9) {};
\node [internal, scale=0.6] (b2) at (5, 7.7) {};
\node [internal, scale=0.6] (b3) at (5, 5.5) {};
\node [internal, scale=0.6] (b4) at (7, 4.2) {};
\node [internal, scale=0.6] (b5) at (9, 7.7) {};
\node [internal, scale=0.6] (b6) at (9, 5.5) {};
\draw[s5]  (a2) to  (b1);
\draw[s2]  (a2) to  (b2);
\draw[s4]  (a2) to  (b3);
\draw[s4]  (a3) to  (b1);
\draw[s3]  (a3) to  (b5);
\draw[s5]  (a3) to  (b6);
\draw[s5]  (a5) to  (b3);
\draw[s1]  (a5) to  (b4);
\draw[s4]  (a5) to  (b6);
\node [internal, scale=0.6] (c1) at (7, 7.5) {};
\node [internal, scale=0.6] (c2) at (6, 6) {};
\node [internal, scale=0.6] (c3) at (8, 6) {};
\draw[s2]  (c1) to  (c2);
\draw[s1]  (c2) to  (c3);
\draw[s3]  (c3) to  (c1);
\draw[s3]  (b1) to  (b2);
\draw[s2]  (b1) to  (b5);
\draw[s1]  (b1) to  (c1);
\draw[s1]  (b2) to  (b3);
\draw[s4]  (b2) to  (c1);
\draw[s5]  (b2) to  (c2);
\draw[s2]  (b3) to  (b4);
\draw[s3]  (b3) to  (c2);
\draw[s4]  (b4) to  (c2);
\draw[s5]  (b4) to  (c3);
\draw[s3]  (b4) to  (b6);
\draw[s1]  (b6) to  (b5);
\draw[s2]  (b6) to  (c3);
\draw[s5]  (b5) to  (c1);
\draw[s4]  (b5) to  (c3);
\node [internal, scale=0.6] (a1) at (-2, 10) {};
\node [internal, scale=0.6] (a2) at (0, 10) {};
\node [internal, scale=0.6] (a3) at (14, 10) {};
\node [internal, scale=0.6] (a4) at (16, 10) {};
\node [internal, scale=0.6] (a5) at (7, 0) {};
\node [internal, scale=0.6] (b1) at (7, 9) {};
\node [internal, scale=0.6] (b2) at (5, 7.7) {};
\node [internal, scale=0.6] (b3) at (5, 5.5) {};
\node [internal, scale=0.6] (b4) at (7, 4.2) {};
\node [internal, scale=0.6] (b5) at (9, 7.7) {};
\node [internal, scale=0.6] (b6) at (9, 5.5) {};
\node [internal, scale=0.6] (c1) at (7, 7.5) {};
\node [internal, scale=0.6] (c2) at (6, 6) {};
\node [internal, scale=0.6] (c3) at (8, 6) {};
\end{tikzpicture}
\caption{\label{fig:icosahedron-gadget-colored}A proper edge $5$-coloring of the icosahedron gadget where $e_1, e_2$ have the same color}
\end{figure}
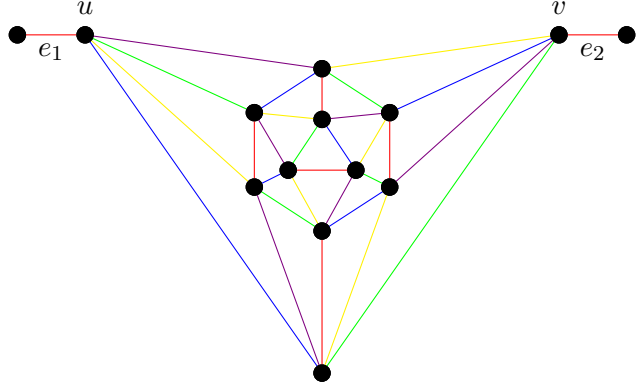 
We claim that $H_5$ satisfies the
key property.
To prove that we only need to show that 
any edge $5$-coloring of $H_5$ must use the same color on the
two external edges.
For a contradiction, suppose this is not so. Then
for some $5$-coloring of $H_5$,
there are two distinct colors {\sl Red} and {\sl Green}, say,
such that  the two external edges are colored {\sl Red} and {\sl Green},
respectively.
Consider the set $S$ of edges in $H_5$ colored {\sl Red} or {\sl Green}.
Clearly  the two external vetices
are incident to edges in $S$.
Since all other vertices of  $H_5$ have degree $5$, all
vertices of  $H_5$ are also  incident to edges in $S$.
$S$ must be a vertex disjoint union of (zero or more) alternating
{\sl Red}-{\sl Green} cycles, and exactly 
one alternating {\sl Red}-{\sl Green} path
starting and ending with the two external edges. 
This alternating path starts and ends
with distinct colors, and 
 therefore it has an even number of
edges and thus an odd number of vertices. The alternating cycles have
an  even number of vertices. This is a contradiction to
${\sf V} \equiv 0 \bmod 2$.
(The exact value  $c >0$ for $H_5$ is a constant and is unimportant
to the existence of this reduction.)
\end{proof}



The idea of the proof can also be adapted to prove the following theorem
without the planarity resquirement.

\begin{theorem}\label{thm:edge_coloring:non-planar-k=r>=3_simple}
\#$\kappa$-\textsc{EdgeColoring} is $\SHARPP$-complete
over $\kappa$-regular simple graphs, for all $\kappa  \ge 3$.
\end{theorem}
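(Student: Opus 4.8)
The plan is to mimic the reduction used in Theorem~\ref{thm:edge_coloring:k=r=3_simple}, again reducing \#$\kappa$-\textsc{EdgeColoring} over $\kappa$-regular (loopless) multigraphs, which is $\SHARPP$-complete by Theorem~\ref{thm:edge_coloring} (the planar case there already suffices, since planar multigraphs are a special case), to the same problem over $\kappa$-regular \emph{simple} graphs. As before, I would replace every edge of a $\kappa$-regular multigraph $G$ by a single fixed gadget $H_\kappa$ carrying two external edges $e_1, e_2$, obtaining a $\kappa$-regular simple graph $G'$. The whole reduction hinges on the same \emph{key property}: in every proper edge $\kappa$-coloring of $G'$ the two external edges of each gadget receive the same color, such colorings exist, and---by the color symmetry of $\AD_{r,\kappa}$ under permuting $[\kappa]$---the number $c$ of internal extensions is independent of the common color. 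Granting this, the gadget's signature equals $c \cdot (=_2)$, so the Holant values of $G'$ and $G$ differ by the factor $c^{\sf E}$, yielding a polynomial-time Turing reduction.

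The only change from the planar case is the choice of gadget, since we now need one valid \emph{uniformly for all} $\kappa \ge 3$ and are free to abandon planarity. I would take $H_\kappa$ to be the complete bipartite graph $K_{\kappa,\kappa}$ with one edge $uv$ deleted and two dangling edges $e_1, e_2$ attached at $u$ and $v$. This is a simple $\kappa$-regular fragment: $u$ and $v$ regain degree $\kappa$ through their dangling edges, and every other vertex already has degree $\kappa$. Replacing each edge of $G$ by a private copy of $H_\kappa$ (wiring $e_1, e_2$ to the two original endpoints) keeps all original vertices at degree $\kappa$ and introduces no loops or parallel edges, so $G'$ is a $\kappa$-regular simple graph and the reduction is polynomial time.

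To prove the key property I would run exactly the parity argument from the $\kappa = 5$ case, which needs nothing more than that $H_\kappa$ is $\kappa$-edge-colorable and has an \emph{even} number of vertices. Existence of a coloring with $e_1, e_2$ equal is immediate: take a proper edge $\kappa$-coloring of the full $K_{\kappa,\kappa}$ (it is Class~1, decomposing into $\kappa$ perfect matchings), say with $uv$ colored \textsl{Red}; deleting $uv$ and coloring both $e_1, e_2$ \textsl{Red} is proper. For the forcing direction, suppose some proper coloring had $e_1$ \textsl{Red} and $e_2$ \textsl{Green}. Since every vertex of $H_\kappa$ has degree $\kappa$, it meets all $\kappa$ colors, hence exactly one \textsl{Red} and one \textsl{Green} incident edge; at $u$ and $v$ one of these is the dangling edge. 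Thus the set $S$ of internal \textsl{Red}/\textsl{Green} edges has every vertex of degree $2$ except $u$ and $v$ of degree $1$, so $S$ is a disjoint union of alternating cycles together with exactly one alternating path $P$ joining $u$ to $v$. Each alternating cycle has an even number of vertices, while $P$ begins (at $u$) and ends (at $v$) on distinct colors, forcing an even number of edges and hence an odd number of vertices. Since every vertex of $H_\kappa$ lies in $S$, summing gives $|V(H_\kappa)|$ odd, contradicting $|V(K_{\kappa,\kappa})| = 2\kappa$.

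The main obstacle, and the one genuine departure from the planar argument, is precisely this need for an even vertex count. For odd $\kappa$ the handshake identity $\kappa {\sf V} = 2 {\sf E}$ already forces ${\sf V}$ even, exactly as for the icosahedron; but for even $\kappa$ it does not, so the gadget graph must be chosen to have even order on purpose. Using $K_{\kappa,\kappa}$ resolves this uniformly---its order $2\kappa$ is even for every $\kappa$---while also supplying for free the $\kappa$-edge-colorability needed for the existence half of the key property.
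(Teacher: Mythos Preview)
Your proof is correct and follows the same strategy as the paper: replace each edge by a fixed $\kappa$-regular simple gadget obtained by deleting one edge from a $\kappa$-edge-colorable $\kappa$-regular simple graph $H$ and attaching two dangling edges, then invoke the identical alternating Red/Green path parity argument (even vertex count forces the two dangling edges to share a color). The only difference is the choice of $H$: the paper builds a circulant-type graph on $\kappa!$ vertices from explicit perfect matchings in $\mathbb{Z}_{\kappa!}$, whereas you take $K_{\kappa,\kappa}$, which is cleaner and just as valid. One small remark: your closing paragraph treats the even-order requirement as something that must be arranged ``on purpose'' for even $\kappa$, but in fact---as the paper observes---any $\kappa$-regular graph with a proper edge $\kappa$-coloring automatically has even order, since each color class is a perfect matching; so once $\kappa$-edge-colorability is in hand, evenness comes for free.
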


\begin{proof}
Let $H$ be a $\kappa$-regular simple graph that admits a proper edge 
$\kappa$-coloring. For now assume such a graph exists.
If we consider any color, the edges with that color form a perfect matching
of $H$, because the number of colors  and the regularity parameter
are both $\kappa$. Hence $H$ has an even number of vertices.

Take any two adjacent vertices $u$ and $v$.
Now remove one edge $(u,v)$  of $H$; this defines a graph $H'$.
Define $H^*$ to be the graph obtained from $H'$ by adding two
external edges $e_1$ and $e_2$
incident to $u$ and $v$, respectively.
$H^*$ will be the gadget replacing every edge in a 
$\kappa$-regular multigraph  $G$
to produce a $\kappa$-regular  simple graph  $G'$.
By the same proof above we see that in any
edge $\kappa$-coloring of $H'$, the  $\kappa - 1$ incident  edges at $u$ 
receive the same set of  $\kappa - 1$ colors as the $\kappa - 1$ incident  edges at $v$.
Thus  $H^*$ satisfies the key property stated in the proof of 
Theorem~\ref{thm:edge_coloring:k=r=3_simple}.
This completes the proof, provided we exhibit one such graph $H$.

Let $n = \kappa!$. Since $\kappa  \ge 3$, we have $n/2 \ge \kappa$.
We take the vertex set of $H$ to be the additive group $\mathbb{Z}_n$.
We define $\kappa$ edge disjoint perfect matchings, and the edge set of
$H$ as their union.
For $1 \le \ell \le  \lfloor \kappa/2 \rfloor$,
let $E_\ell = \{(i, i+\ell) \mid 0 \le i < \ell\}$,
and $E'_\ell =  \{(i+ \ell, i+ 2 \ell) \mid 0 \le i < \ell \}$.
Let $M_\ell$ and $M'_\ell$ be respectively
 the orbits of $E_\ell$ and  $E'_\ell$ by the subgroup 
$(2\ell) \mathbb{Z}_n \simeq \mathbb{Z}_{n/(2\ell)}$.
Thus
\begin{eqnarray*}
M_\ell &=& \bigcup_{i=0}^{\ell-1} \{(i, i+ \ell), (i+ 2\ell, i+ 3\ell), \ldots, (i+ n-2\ell, i+ n-\ell)\},\\
M'_\ell &=& \bigcup_{i=0}^{\ell-1} \{(i+ \ell, i+ 2\ell), (i+ 3\ell, i+ 4\ell), \ldots, (i+ n-\ell, i+ n)\}.
\end{eqnarray*}
Note that $2\ell \le \kappa$ and so $2\ell \mid n$. 
If $\kappa$ is odd, then add one more perfect matching
$M = \{(j, j+n/2) \mid 0 \le j < n/2\}$.
This defines  $\kappa$ edge disjoint perfect matchings.
\end{proof}




\section{Case \texorpdfstring{$\kappa > r$}{kappa > r}}\label{sec:kappa>r}

In this section, we prove similar results for simple graphs where $\kappa > r$.
We use $J_\kappa$
to denote the all-1 $\kappa$ by $\kappa$ matrix.
We call a gadget $r$-regular if its internal vertices all have degree $r$.

\begin{theorem}\label{thm:edge_coloring_simple_planar:k>r}
\#$\kappa$-\textsc{EdgeColoring} is $\SHARPP$-complete over planar $r$-regular simple graphs for all $\kappa > r$, and $r \in \{ 3, 4, 5 \}$.
\end{theorem}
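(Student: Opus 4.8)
The plan is to reduce from the multigraph problem, Theorem~\ref{thm:edge_coloring} specialized to the relevant parameters. Given a planar $r$-regular (loopless) multigraph $G$, I want to produce planar $r$-regular \emph{simple} graphs from which $\Holant(G)$, the number of proper $\kappa$-edge-colorings of $G$, can be recovered. The mechanism reuses the edge-replacement idea from the case $\kappa = r$: take the planar $r$-regular Platonic solid — the tetrahedron $K_4$ for $r=3$, the octahedron for $r=4$, the icosahedron for $r=5$ — delete one edge $(u,v)$ lying on the outer face, and attach two dangling edges $e_1,e_2$ at $u$ and $v$. Call this gadget $H^*$. It is planar, simple, and $r$-regular, since every internal vertex, including $u$ and $v$ once the dangling edges are counted, has degree $r$. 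Replacing each edge of $G$ by a copy of $H^*$ routed in a neighborhood of that edge yields a planar $r$-regular simple graph; distinct copies are vertex-disjoint, so no parallel edges or loops appear.

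The essential difference from $\kappa = r$ is that here the two dangling edges are no longer forced to agree: with $\kappa > r$ colors there is slack, so the parity obstruction that made $H^*$ an exact equality gadget disappears. Instead, by invariance under permutations of the color set $[\kappa]$, the binary signature of $H^*$ is a $\kappa\times\kappa$ matrix of the form $M = (p-q)\,I_\kappa + q\,J_\kappa$, where $p$ counts proper colorings of $H^*$ with $e_1,e_2$ receiving one fixed common color and $q$ counts those with $e_1,e_2$ receiving a fixed ordered pair of distinct colors. This matrix has eigenvalue $\lambda_1 = p + q(\kappa-1)$ on the all-ones vector and eigenvalue $\lambda_2 = p - q$ on its orthogonal complement. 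Chaining $t$ copies of $H^*$ in series — merging the outgoing dangling edge of one copy with the incoming dangling edge of the next into a single edge and summing over its color — realizes the matrix power $M^t = \lambda_1^t P_1 + \lambda_2^t P_2$, where $P_1 = J_\kappa/\kappa$ and $P_2 = I_\kappa - J_\kappa/\kappa$ are the spectral projectors; this chained gadget is again planar, simple, and $r$-regular.

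Now I would set up a polynomial interpolation. Replacing every edge of $G$ by the $t$-fold chain and expanding each edge-matrix as $\lambda_1^t P_1 + \lambda_2^t P_2$ writes the resulting value as $\Holant(G; M^t) = \sum_{k=0}^{|E|} c_k\,\lambda_1^{tk}\lambda_2^{t(|E|-k)}$, where the coefficients $c_k$ are independent of $t$ and, because $I_\kappa = P_1 + P_2$, satisfy $\sum_k c_k = \Holant(G; I_\kappa) = \Holant(G)$, the multigraph count we want. Evaluating this expression at $|E|+1$ distinct positive integers $t$ (positive, so each $G_t$ is simple) gives a linear system whose coefficient matrix is Vandermonde in the values $(\lambda_1/\lambda_2)^t$; provided $\lambda_2\neq 0$ and $\rho := \lambda_1/\lambda_2$ is neither $0$ nor a root of unity, the system is invertible and we recover all $c_k$, hence $\Holant(G)$, from polynomially many oracle calls to the simple-graph problem. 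Since $\Holant(G)$ is $\SHARPP$-hard by Theorem~\ref{thm:edge_coloring}, this is a Turing reduction establishing $\SHARPP$-completeness over planar $r$-regular simple graphs.

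The main obstacle is verifying the spectral non-degeneracy. A one-line check gives $p\ge 1$ (color both dangling edges alike and extend a coloring of the solid), so $\lambda_1 = p + q(\kappa-1) > 0$ and $\rho$ is a nonzero rational; a rational root of unity is only $\pm 1$, and $\rho = -1$ would force $2p + q(\kappa-2)=0$, impossible for $\kappa>2$ unless $p=q=0$. Likewise $q\ge 1$, since at $u$ the other $r-1$ edges occupy $r-1$ colors and the extra color ($\kappa>r$) lets $e_1$ differ from $e_2$; hence $\rho\neq 1$. The genuinely delicate point is $\lambda_2 = p-q\neq 0$. I would treat $p=p(\kappa)$ and $q=q(\kappa)$ as polynomials in $\kappa$ — each is a boundary-conditioned chromatic count of a fixed finite gadget — and argue that $p(\kappa)-q(\kappa)$ is not identically zero: at $\kappa=r$ the $\kappa=r$ analysis yields $q(r)=0$ while $p(r)>0$ (exactly the ``key property'' constant from the proof of Theorem~\ref{thm:edge_coloring:k=r=3_simple}), so $p(r)-q(r)>0$. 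A nonzero polynomial has finitely many roots, so $\lambda_2\neq 0$ for all but finitely many $\kappa>r$; the finitely many residual values of $\kappa$, if any, would have to be handled by substituting an alternative planar $r$-regular simple gadget (a larger solid-based gadget, or a series composition of two distinct gadgets) engineered to move $\lambda_2$ off zero. Pinning down that no exceptional $\kappa$ survives — or exhibiting a uniform gadget valid for every $\kappa>r$ — is where I expect the real work to concentrate.
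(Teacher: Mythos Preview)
Your overall strategy---replace each edge by a chain of the Platonic-solid gadget $H^*$, note that domain invariance forces the binary signature to be $(p-q)I_\kappa + qJ_\kappa$, and interpolate over the chain length via a Vandermonde system---is precisely the paper's. Your verifications that $p\ge 1$, $q\ge 1$, $\lambda_1>0$, and that the ratio $\rho$ cannot be a nontrivial root of unity are all fine. The gap is exactly where you locate it: ruling out $\lambda_2 = p - q = 0$. Your polynomial-in-$\kappa$ argument only reduces this to an unspecified finite set of exceptional $\kappa$ for each $r$, and ``find an alternative gadget for each exception'' is a plan, not a proof, without actually exhibiting one that works.

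The paper dispatches this uniformly, with no case analysis in $\kappa$, by exploiting the degeneracy itself. Suppose $p=q$, so the signature matrix is $A = qJ_\kappa = q\,\mathbf{1}\mathbf{1}^T$ with $q>0$. Then inserting a copy of $H^*$ in place of an edge is, up to the nonzero scalar $q$, identical to cutting that edge and feeding the unary all-ones signature into both new endpoints. Feeding all-ones into one slot of $\AD_{n,\kappa}$ returns $(\kappa-n+1)\,\AD_{n-1,\kappa}$, a nonzero multiple of $\AD_{n-1,\kappa}$ since $\kappa>r\ge n$. Now fix a path $P$ through $H^*$ between the two dangling edges, with $s\ge 1$ internal vertices, and form a new gadget $g$ by replacing every edge of $H^*$ \emph{not} on $P$ with a fresh copy of $H^*$. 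This $g$ is still planar, simple, connected, and $r$-regular; but as a signature, up to a computable nonzero scalar, every off-path edge has been severed, collapsing each on-path vertex from $\AD_{r,\kappa}$ down to $\AD_{2,\kappa}=(\neq_2)$ and turning the off-path vertices into scalars. Hence $g$ has signature matrix $(J_\kappa - I_\kappa)^s$, whose diagonal-minus-off-diagonal entry is $(-1)^s\neq 0$. Run your interpolation with $g$ in place of $H^*$ and the argument closes.
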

\begin{proof}
We reduce the problem \#$\kappa$-\textsc{EdgeColoring} over planar $r$-regular multigraphs,
which is \SHARPP-complete by Theorem \ref{thm:edge_coloring}, to planar
 $r$-regular simple graphs in polynomial time.
Then the statement of the theorem follows.

For now let us suppose
 we can construct a binary planar $r$-regular simple gadget $f$
consisting of only $\AD_{r, \kappa}$ signatures.
We say a function is  domain invariant if it is invariant
under any permutation of the input domain elements.
Clearly $\AD_{r, \kappa}$ is domain invariant,  and so
$f$ is also domain invariant.
Hence its  $\kappa$ by $\kappa$ signature matrix
has the form $A = (a - b) I_\kappa + b J_\kappa$
for some nonnegative integers $a$ and $b$, i.e., it has values $a$ 
and $b$ on and off the main diagonal, correspondingly.
In particular, $A$ is symmetric.
For any $n \ge 1$, consider a binary gadget $f_n$
obtained by connecting $f$ sequentially $n$ times,
with $f_1 = f$
(see Figure~\ref{fig:gadget:k>=r:binary:interpolation}).
Note that the order in which the dangling edges of
 $f$ are connected to each other
makes no difference in this construction
because  $A$ is symmetric.
Obviously,  each gadget $f_n$ for $n \ge 1$ is still
a planar $r$-regular simple gadget, with
a domain invariant signature matrix $A^n$.
Assume further that we can satisfy  $a \ne b$.

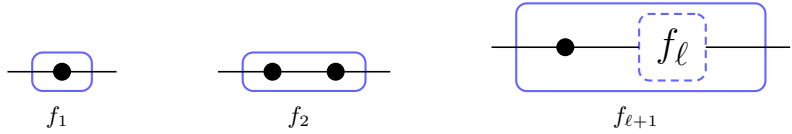
\begin{figure}[t]
 \centering
 \captionsetup[subfigure]{labelformat=empty}
\subfloat[$f_1$]{
  \begin{tikzpicture}[scale=\scale,transform shape,node distance=\nodeDist,semithick]
   \node[external] (0)              {};
   \node[internal] (1) [right of=0] {};
   \node[external] (2) [right of=1] {};
   \path (0) edge (1)
         (1) edge (2);
   \begin{pgfonlayer}{background}
    \node[draw=\borderColor,thick,rounded corners,inner xsep=8pt,inner ysep=4pt,fit = (1), transform shape=false] {};
   \end{pgfonlayer}
  \end{tikzpicture}}
 \qquad
\subfloat[$f_2$]{
  \begin{tikzpicture}[scale=\scale,transform shape,node distance=\nodeDist,semithick]
   \node[external] (0)              {};
   \node[internal] (1) [right of=0] {};
   \node[internal] (2) [right of=1] {};
   \node[external] (3) [right of=2] {};
   \path (0) edge (1)
         (1) edge (2)
         (2) edge (3);
   \begin{pgfonlayer}{background}
    \node[draw=\borderColor,thick,rounded corners,inner xsep=8pt,inner ysep=4pt,fit = (1) (2), transform shape=false] {};
   \end{pgfonlayer}
  \end{tikzpicture}}
 \qquad
\subfloat[$f_{\ell+1}$]{
  \begin{tikzpicture}[scale=\scale,transform shape,node distance=\nodeDist,semithick]
   \node[external] (0)              {};
   \node[external] (1) [right of=0] {};
   \node[external] (2) [right of=1] {};
   \node[external] (3) [right of=2] {\Huge $f_\ell$};
   \node[external] (4) [right of=3] {};
   \node[external] (5) [right of=4] {};
   \path (0) edge node[internal] (e1) {} (3)
         (3) edge (5);
   \begin{pgfonlayer}{background}
    \node[draw=\borderColor,thick,densely dashed,inner sep=0pt, rounded corners,fit = (3), transform shape=false] {};
    \node[draw=\borderColor,thick,rounded corners,inner xsep=8pt,inner ysep=4pt,fit = (1) (3) (4), transform shape=false] {};
   \end{pgfonlayer}
  \end{tikzpicture}}
\caption{Recursive construction to interpolate the binary \textsc{Equality} signature $=_2$.
All vertices are assigned the gadget $f$. The dangling edges of $f$ can be connected in any order.}
 \label{fig:gadget:k>=r:binary:interpolation}
\end{figure}

Let $\Omega = \Omega(G)$ be an input instance to $\PlHolant(\AD_{r, \kappa})$,
 where $G = (V, E)$ is a planar $r$-regular multigraph.
Here every vertex in $V$ is assigned a symmetric signature $\AD_{r, \kappa}$.
Let $F \subseteq E$ be the subset of the edges of $G$
each of which is parallel to at least one other edge.
For any $n \ge 1$, we construct $\Omega_n = \Omega(G_n)$ to be the
 instance of $\PlHolant(\AD_{r, \kappa})$ 
obtained by replacing every edge in $F$ by the gadget $f_n$.
Clearly, the underlying graph $G_n$ of $\Omega_n$
 is simple  planar  $r$-regular
and has size polynomial in $n$ and the size of $\Omega$.
Let $m = |F|$, which is also the number of the gadgets $f_n$ used
 in $\Omega_n$.
Let $\Omega' = \Omega(G')$ be a signature grid obtained from $\Omega$
by placing the binary \textsc{Equality} signature $=_2$ in the middle of every edge in $F$. 
Clearly, $\Holant_{\Omega'} = \Holant_{\Omega}$.
We show how to compute this value from
the values $\Holant_{\Omega_n}$ where $n \ge 1$ in polynomial time.


The matrix $A = (a - b) I_\kappa + b J_\kappa$ has eigenvalues
 $\lambda_1 = a - b + \kappa b$ and $\lambda_i = a - b$ 
for $2 \le i \le \kappa$.
Since it is real symmetric,
 it can be orthogonally diagonalized over $\mathbb{R}$, i.e.,
there exist a real orthogonal matrix $S$
and a real diagonal matrix $D = (\lambda_i)_{i = 1}^\kappa$ such that 
$A = S^T D S$.
As $a \ne b$ we have $\lambda_i = a - b \ne 0$ for $2 \le i \le \kappa$.
Also $\lambda_1 = a + (\kappa-1)b >0$. 
It follows that the matrices $A$ and $D$ are nondegenerate.
Clearly $A^n = S^T D^n S$,  for all $n \ge 1$,
and is also symmetric.

We can write the $(i,j)$ entry
$(A^n)_{i j} = \sum_{\ell = 1}^\kappa \alpha_{i j \ell} \lambda_\ell^n$
by a formal expansion,
for every $n \ge 1$ and some real $\alpha_{i j \ell}$'s
that are dependent on $S$, but independent of $n$ and $\lambda_\ell$,
where $1 \le i, j, \ell \le \kappa$. 
By the formal expansion of the symmetric matrix $A^n$ above,
we have $\alpha_{i j \ell} = \alpha_{j i \ell}$.
%

In the evaluation of $\Holant_{\Omega_n}$,
we stratify the (edge) assignments in $\Omega_n$
based on the colors assigned to the dangling edges of
the binary gadgets $f_n$ in $\Omega_n$ as follows.
Denote by $\tau = (t_{i j})_{1 \le i \le j \le \kappa}$
a nonnegative integer tuple with entries indexed by ordered pairs of numbers
and that satisfy $\sum_{1 \le i \le j \le \kappa} t_{i j} = m$.
Let $c_\tau$ 
be the sum over all assignments of the products of all signatures
in $\Holant_{\Omega_n}$,
 except the contributions by the gadgets $f_n$
such that the endpoints of precisely $t_{i j}$ constituent gadgets $f_n$
receive the assignments $(i, j)$ (in either order of the end points)
 for every $1 \le i \le j \le \kappa$.
Let $\mathcal T$ denote the set of all such possible tuples $\tau$, where
 $|\mathcal T| = \binom{m + \kappa (\kappa + 1) / 2 - 1}{\kappa (\kappa + 1) / 2 - 1}$.
Then
\[
\Holant_{\Omega_n} = \sum_{\tau \in \mathcal T} c_\tau \prod_{1 \le i \le j \le \kappa} ((A^n)_{i j})^{t_{i j}}
= \sum_{\tau \in \mathcal T} c_\tau \prod_{1 \le i \le j \le \kappa} (\sum_{\ell = 1}^\kappa \alpha_{i j \ell} \lambda_\ell^n)^{t_{i j}}. 
\]
Expanding out the last sum and rearranging the terms we get
\begin{equation}\label{eq:interpolation-lin-sys-vand} 
\Holant_{\Omega_n}
= \sum_{\substack{i_1 + \ldots + i_\kappa = m \\ i_1, \ldots, i_\kappa \ge 0}} b_{i_1, \ldots, i_\kappa} ( \prod_{\ell = 1}^\kappa \lambda_\ell^{i_\ell} )^n,
\end{equation}
for some numbers $b_{i_1, \ldots, i_\kappa}$ independent of
$\lambda_\ell$'s.

This can be viewed as a linear system with the unknowns $b_{i_1, \ldots, i_\kappa}$ with the rows indexed by $n$. 
The number of unknowns is  $\binom{m + \kappa - 1}{\kappa - 1}$
which is polynomial in $n$ and the size of the input instance $\Omega$, since $\kappa$ is a constant.
The values $\prod_{\ell = 1}^\kappa \lambda_\ell^{i_\ell}$
can all be computed in polynomial time.

On the other hand, it is clear that
\[
\Holant_{\Omega'} = \displaystyle \sum_{\substack{i_1 + \ldots + i_\kappa = m \\ i_1, \ldots, i_\kappa \ge 0}} b_{i_1, \ldots, i_\kappa}.
\]
We show next how to compute the value $\Holant_{\Omega'}$
from the values $\Holant_{\Omega_n}$ where $n \ge 1$ in polynomial time.
The coefficient matrix of the linear system (\ref{eq:interpolation-lin-sys-vand})
 is Vandermonde.
However, when $m \ge 1$, it is not of full rank
because the coefficients $\prod_{\ell = 1}^\kappa \lambda_\ell^{i_\ell}$
are not pairwise distinct, and therefore
it has repeating columns.
Nevertheless, when there are two repeating columns we replace
the corresponding unknowns $b_{i_1, \ldots, i_\kappa}$ and $b_{i'_1, \ldots, i'_\kappa}$
with their sum as a new variable; we repeat this replacement procedure until
there are no repeating columns.
Since all $\lambda_\ell \ne 0$, for $1 \le \ell \le \kappa$,
after the replacement,
 we have a Vandermonde system of full rank.
Therefore we can solve this modified linear system
in polynomial time and find the desired value $\Holant(\Omega) = \Holant(\Omega') = \displaystyle \sum_{\substack{i_1 + \ldots + i_\kappa = m \\ i_1, \ldots, i_\kappa \ge 0}} b_{i_1, \ldots, i_\kappa}$.

\begin{remark}
The above proof did not use the fact in this specific case
$\lambda_2 = \ldots = \lambda_\kappa$.
The general Vandermonde argument does not need this property.
\end{remark}

We are left to prove that we can construct
a binary planar $r$-regular simple gadget $f$ with distinct
diagonal and off diagonal values  $a \ne b$ in its signature matrix $A$.
From the previous section we know that for every $r \in \{ 3, 4, 5 \}$
there exists a binary connected planar $r$-regular simple gadget.
For example, one can take $f$ to be the gadget $H_r$ described in Theorem~\ref{thm:edge_coloring:k=r=3_simple},
i.e., the ones shown in Figures~\ref{fig:H3}, \ref{fig:H4} and \ref{fig:icosahedron-break-one-edge}, correspondingly. 
The connectedness of each $H_3, H_4$ and $H_5$ is self-evident.
In the proof of Theorem~\ref{thm:edge_coloring:k=r=3_simple}, 
we also showed that
in each  respective case $r \in \{ 3, 4, 5 \}$,
$H_r$  can be $r$-colored  so that
the two dangling edges are colored with the same color.
Fix such a coloring.
Since now we are given $\kappa > r$, 
we can modify this coloring to change the color of one of the dangling edges
to one of the remaining $\kappa - r \ge 1$  colors
different from the $r$ colors used.
This still produces a proper $\kappa$-edge coloring
but in which the dangling edges are now colored differently.
Therefore $b \neq 0$.
If $a \neq b$, then we are done.

Suppose $a = b$, so we can write $A = b J_\kappa$.
$J_\kappa$ can be written as the column vector $(1, 1, \ldots, 1)_\kappa^T$
times the row vector $(1, 1, \ldots, 1)_\kappa$.
Clearly, in terms of evaluating signature value,
 replacing an internal edge within
the gadget $f$  with another copy of the gadget $f$ itself
is equivalent, up to a nonzero scalar $b$,
 to cutting this edge in half and assigning 
the unary signature $(1, 1, \ldots, 1)_\kappa$
to the two new degree one vertices.
Now choose a path $P$ in the gadget $f = H_r$
starting in one dangling edge and ending in the other.
Since $f$ is connected, such a path exists.
Suppose it  has  $s$ internal nodes,
where  $s \ge 1$.
Let $g$ be the binary gadget obtained
by replacing every edge in $f$ that is \emph{not}
 in $P$ by a  new copy of $f$
(by the symmetry of
$f$,
the order of the dangling edges of $f$ in the replacement bears no difference).
It is easy to see that $g$ is a binary connected planar $r$-regular simple gadget. 
Note that connecting $(1, 1, \ldots, 1)_\kappa$ to $\AD_{n, \kappa}$
gives us the signature $(\kappa - n + 1) \AD_{n - 1, \kappa}$ where $1 \le n \le \kappa$
so it is the signature $\AD_{n - 1, \kappa}$,
up to the nonzero scalar $\kappa - n + 1$.
Thus $g$, as a constraint function,
is equivalent to simply a path $P$ with $\AD_{2, \kappa} \equiv (\neq_2)$
assigned to the intermediate vertices.
Hence, (up to an easily computable nonzero constant)
its signature matrix is $(J_\kappa - I_\kappa)^{s}$.
It is easy to verify that $(J_\kappa - I_\kappa)^{s}$ has
the form $(-1)^s I_\kappa + c_{s, \kappa} J_\kappa$ for some
$c_{s, \kappa}$, and $(-1)^s$ is the difference of the diagonal value
and the off diagonal value, in particular nonzero.
Hence all the necessary conditions on $g$ are satisfied and 
we can use $g$ instead of $f$.
\end{proof}

Finally, we deal with the nonplanar case for $\kappa > r \ge 3$.
\begin{theorem}\label{thm:edge_coloring_simple_nonplanar:k>r}
\#$\kappa$-\textsc{EdgeColoring} is $\SHARPP$-complete over
$r$-regular simple graphs for $\kappa > r \ge 3$.
\end{theorem}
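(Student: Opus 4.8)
The plan is to run the argument from the proof of Theorem~\ref{thm:edge_coloring_simple_planar:k>r} with every mention of planarity deleted. I would reduce from $\#\kappa$-\textsc{EdgeColoring} over $r$-regular multigraphs, which is $\SHARPP$-complete by Theorem~\ref{thm:edge_coloring}, replacing the parallel edges of the input multigraph $G$ by a simple $r$-regular gadget so as to turn $G$ into a simple $r$-regular graph $G'$. As in the planar case, everything reduces to constructing one binary $r$-regular simple gadget $f$ whose $\kappa \times \kappa$ signature matrix has the form $A = (a-b) I_\kappa + b J_\kappa$, a shape forced by the domain invariance of $\AD_{r,\kappa}$, with $a \neq b$ and $b \neq 0$. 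Once such an $f$ is available, the Vandermonde interpolation of Theorem~\ref{thm:edge_coloring_simple_planar:k>r} applies verbatim: one forms the series gadgets $f_n$ with matrix $A^n$, stratifies $\Holant_{\Omega_n}$ according to the colors appearing on the inserted gadget pairs, and solves the reduced full-rank Vandermonde system to recover $\Holant_{\Omega'} = \Holant_{\Omega}$. Nothing in that computation uses planarity, only the nondegeneracy of $A$.

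The genuinely new point is that $f$ must now be built for \emph{every} $r \geq 3$, whereas the planar theorem only needed $r \in \{3,4,5\}$; dropping planarity is precisely what removes the Euler-formula obstruction to the existence of such gadgets. I would take any connected $r$-regular simple graph $H$ admitting a proper edge $\kappa$-coloring. Such $H$ exist for all $r \geq 3$---for example $K_{r,r}$, or the circulant graph on $\mathbb{Z}_{r!}$ assembled from $r$ edge-disjoint perfect matchings exactly as in the proof of Theorem~\ref{thm:edge_coloring:non-planar-k=r>=3_simple}---and in fact any $r$-regular simple graph works, since $\chi'(H) \leq r + 1 \leq \kappa$ by Vizing's theorem guarantees a proper edge $\kappa$-coloring. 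Following the recipe of Theorem~\ref{thm:edge_coloring:non-planar-k=r>=3_simple}, I would remove one edge $(u,v)$ of $H$ and attach dangling edges $e_1$ and $e_2$ at $u$ and $v$; then $u$ and $v$ each carry $r-1$ internal incident edges together with one dangling edge, so the resulting gadget $H^*$ is a binary $r$-regular simple gadget whose signature matrix automatically has the required form.

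It then remains to check that $a, b \geq 1$ and to dispose of the degenerate case $a = b$. For $a \geq 1$, take any proper edge $\kappa$-coloring of $H$ and give both $e_1$ and $e_2$ the color that $(u,v)$ carried. For $b \geq 1$, fix a proper coloring of $H$ with $(u,v)$ deleted; at $u$ only $r-1$ colors occur on the surviving incident edges, leaving $\kappa - (r-1) = \kappa - r + 1 \geq 2$ admissible colors for $e_1$, and independently the same number for $e_2$ at $v$, so $e_1 \neq e_2$ can be arranged---this is the single place where the hypothesis $\kappa > r$ is used. Hence $b \neq 0$, and since $a, b \geq 0$ the top eigenvalue $a + (\kappa - 1) b$ is positive, so $A$ is nondegenerate as soon as $a \neq b$. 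If $a \neq b$ we are done; if $a = b$, then $A = b J_\kappa$ is a nonzero multiple of the rank-one all-ones matrix, and I would use the same path-replacement trick as in Theorem~\ref{thm:edge_coloring_simple_planar:k>r}: replacing every edge of $H^*$ lying off a fixed path $P$ from $e_1$ to $e_2$ (with $s \geq 1$ internal vertices) by a fresh copy of $f$ collapses each vertex of $P$ to $\AD_{2,\kappa} = (\neq_2)$ up to a nonzero scalar, yielding a gadget with signature matrix $(J_\kappa - I_\kappa)^s = (-1)^s I_\kappa + c_{s,\kappa} J_\kappa$, whose difference of diagonal and off-diagonal entries $(-1)^s$ is nonzero.

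I expect the main obstacle to be organizational rather than mathematical: confirming that inserting distinct copies of the gadget along the parallel edges of $G$ genuinely produces a \emph{simple} $r$-regular graph $G'$ (so that all multiplicities are destroyed and no new parallel edges are created at the old vertices), and that the degenerate branch $a = b$ is patched uniformly across all $r \geq 3$. The substantive content, namely the interpolation, is inherited in full from the planar proof.
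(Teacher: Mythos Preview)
Your proposal is correct and follows essentially the same approach as the paper: reuse the interpolation argument of Theorem~\ref{thm:edge_coloring_simple_planar:k>r} verbatim, reducing everything to exhibiting for each $r \ge 3$ a binary connected $r$-regular simple gadget that admits a proper coloring with the two dangling edges receiving the same color. The only difference is in how the gadget is supplied: the paper builds one explicitly---two $r$-stars glued to a $K_{r-1,r-1}$, with an explicit $\mathbb{Z}_r$-based $r$-coloring---whereas you take any connected $r$-regular simple graph (e.g.\ $K_{r,r}$), delete an edge, and invoke Vizing's theorem to get the required $\kappa$-coloring and the bound $\kappa-r+1 \ge 2$ for the $b \ge 1$ step; both routes are valid and lead to the same endgame, including the path-replacement fix for the degenerate case $a=b$.
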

\begin{proof}
As in the proof of Theorem~\ref{thm:edge_coloring_simple_planar:k>r},
we reduce the problem \#$\kappa$-\textsc{EdgeColoring} 
over planar $r$-regular multigraphs, 
which is \SHARPP-complete by Theorem~\ref{thm:edge_coloring},
to (non-necessarily planar) $r$-regular simple graphs in polynomial time.
Then the statement of the theorem follows.

Following the steps of the proof of Theorem~\ref{thm:edge_coloring_simple_planar:k>r},
we see that it suffices to produce a binary connected $r$-regular simple gadget $f$
that is $r$-colorable with the additional condition
that the dangling edges are colored with the same color.
The only difference is that we do not
 require $f$ to be a planar gadget (and for $r>5$ this would be impossible).

Let $U = \{ u_1, u_2, \ldots, u_r \}, V = \{ v_1, v_2, \ldots, v_r \}$ and $\{ u, v \}$
be pairwise disjoint sets of vertices.
We take the vertex set of $f$ to be $U \cup V \cup \{ u, v \}$
where $u, v$ will be the external endpoints of the two dangling edges.
Next, we define the edge set of $f$ to be $A \cup B \cup C \cup D$ where
\begin{eqnarray*}
A &=& \{ (u_r, u_i) \mid 1 \le i \le r - 1 \},\\
B &=& \{ (v_r, v_i) \mid 1 \le i \le r - 1 \},\\
C &=& \{ (u_i, v_j) \mid 1 \le i, j \le r - 1 \}, \\
D &=&  \{ (u, u_r), (v_r, v) \}.
\end{eqnarray*}
Here $(u, u_r)$ and $(v_r, v)$ are the dangling edges of $f$;
there is a complete bipartite graph $K_{r-1, r-1}$
between $\{ u_1, u_2, \ldots, u_{r-1}\}$ and $ \{ v_1, v_2, \ldots, v_{r-1}\}$,
while $u_r$ is connected to all of the former, and
$v_r$ is  connected to all of the latter.
It is easy to see that $f$ is indeed a binary connected $r$-regular simple gadget.

We will exhibit a proper $r$-edge coloring of $f$.
For this, we first label the $r$ colors by
the elements of the additive group $\mathbb Z_r = \{ 0, 1, \ldots, r - 1 \}$.
Now define the edge coloring of $f$ as follows:
each $(u_i, v_j) \in B$ is colored $i + j \bmod r$ for 
$1 \le i, j \le r - 1$,
each $(u_r, u_j) \in A$ and each $(v_j, v_r) \in C$ is colored $j$,
for $1 \le j \le r-1$,
and $(u, u_r), (v_r, v) \in D$ are colored $0$.
It is easy to see that this coloring satisfies
all the necessary requirements.
\end{proof}

\bibliography{references}

\end{document}